\documentclass[%
pra,
twocolumn,
superscriptaddress,
amsmath,amssymb,groupaddress,longbibliography,
aps,
]{revtex4-2}
\usepackage[colorlinks=true,linkcolor=blue,urlcolor=blue,citecolor=blue,pdfusetitle]{hyperref}
 
\usepackage{physics,xcolor,palatino}
\usepackage{graphicx}% Include figure files
\usepackage{dcolumn}% Align table columns on decimal point
\usepackage{bm}% bold math
\usepackage[normalem]{ulem}
 
\newcommand{\Id}{\mathbb{I}}
\newcommand{\ui}{\mathrm{i}}
\usepackage{mathtools}
\usepackage{tikz}
\usetikzlibrary{decorations.markings,arrows,arrows.meta}
\usepackage{tikz-network}

\newtheorem{theorem}{Theorem}
\newtheorem{corollary}{Corollary}[theorem]
\newtheorem{lemma}{Lemma}
\newtheorem{conjecture}{Conjecture}

\begin{document}
 
%\preprint{APS/123-QED}
 
\title{Symmetric $N\to M$ telecloning and remote quantum state inference}% Force line breaks with \\
 
\author{Adam G. Hawkins}\email{ahawkins06@qub.ac.uk}
\affiliation{School of Computational Sciences, Korea Institute for Advanced Study, Seoul 02455, Korea}
\affiliation{Centre for Quantum Materials and Technologies, School of Mathematics and Physics, Queen’s University Belfast, BT7 1NN Belfast, United Kingdom}
 
\author{Hannah McAleese}
\affiliation{Department of Computer Science, Munster Technological University, Cork, Ireland}
\affiliation{Centre for Quantum Materials and Technologies, School of Mathematics and Physics, Queen’s University Belfast, BT7 1NN Belfast, United Kingdom}
 
\author{Hyukjoon Kwon}
\affiliation{School of Computational Sciences, Korea Institute for Advanced Study, Seoul 02455, Korea}

\date{\today}% It is always \today, today,
             %  but any date may be explicitly specified
 
\begin{abstract}
Teleporting unknown quantum states between distant nodes of a network is a significant feature of quantum communication. Few studies, however, have been conducted on $N \to M$ telecloning, in which $N$ copies of an unknown quantum state are optimally teleported to $M \geq N$ receivers. Previous work requires global POVMs on all copies and auxiliaries; here, we show that symmetric $N \to M$ telecloning can be probabilistically performed using only sequential (or parallel) Bell state measurements (BSMs), allowing the copies to be spatially separated. When successful, the fidelity of each receiver's reduced state saturates the bound imposed by the no-cloning theorem, with the probability of success being independent of $M$. In the case of an `unsuccessful' BSM, we show that the teleportation fidelity is likely to remain high, with the measurement outcome also providing information about the closest Pauli eigenbasis to the unknown state being telecloned. In this sense, each receiver can remotely infer the unknown quantum state with a level of confidence that scales with the number of copies, even if the fidelity of their own reduced state is sub-optimal. We additionally analyze the required resources to ensure the classical-communication fidelity bound is surpassed, both in terms of two-qubit inseparability and varying classes of multipartite entanglement. Surprisingly, we uncover that, at a given iteration of the protocol, pairwise entanglement is not always necessary to increase the teleportation fidelity and is never required to beat the classical bound for $N\geq 2$.
\end{abstract}

\maketitle
\section{Introduction}
Quantum teleportation~\cite{Bennett_1993_teleportation,Pirandola_2015_review,Hu_2023_review} is a phenomenon of quantum information theory whereby Alice may deterministically teleport a single copy of an arbitrary, unknown quantum state to Bob. This protocol relies on shared entanglement, a joint measurement at Alice's lab and local operations and classical communication (LOCC), the latter of which ensures no information can be transferred superluminally. Meanwhile, the $N\to M$ quantum cloning~\cite{Buzek_1996,Buzek_1997,Gisin,Buzek_1998,Zanardi_1998,Werner_1998,Cerf_cont_2000,Scarani_2005,CERF2006455} of an arbitrary, unknown quantum state involves the consumption of $N$ copies to produce $M\geq N$ approximate clones. While Alice is precluded from making perfect clones by the no-cloning theorem~\cite{Dieks_1982,Wootters_1982}, she may use auxiliary qubits and an entangling unitary operation to distribute the quantum information in such a way that the upper bound of the clones' fidelity is saturated; this is labeled \textit{optimal} quantum cloning. Applications of quantum cloning include certain quantum computing tasks~\cite{Galvao_2000,Ricci_2005} as well as state estimation~\cite{Bruss_1998_state_est,Bruss_1999}, the latter of which is a known eavesdropping strategy for BB84-type~\cite{BB84} quantum key distribution protocols~\cite{Bruss_1998_state_est,Bartkiewicz_2013}.

It has been shown that the two aforementioned concepts can be combined to execute a task known as  \textit{quantum telecloning}~\cite{Murao,Dur_1999,Murao_2000,vanLoock_2001,Araneda_2016}. Specifically, an $N\to M$ quantum telecloning protocol aims to optimally teleport $N$ copies of an arbitrary, unknown quantum state to $M\geq N$ receivers in a network. The study of quantum telecloning allows Alice to transport quantum information to a number of receivers without needing to trust an intermediary, and since the information is distributed to multiple nodes, telecloning could prove useful if there is a high probability of node loss when attempting standard teleportation.

While the output fidelity of quantum (tele)cloning is most often \textit{universal} (i.e., the fidelity is equal for all unknown input states), state-dependent quantum cloning has also been considered~\cite{Bruss_1998_dependent,Fan_2014}. Moreover, quantum cloning may be divided into the \textit{symmetric} and \textit{asymmetric} classes. The former describes quantum cloning protocols where the output fidelity is equivalent for each of the $M$ clones, whereas asymmetric cloning may distribute the quantum information unevenly~\cite{Niu_1998,Cerf_2000,Iblisdir_2005,Zhao_2005}. The latter concept has been further extended to $1\to M$ quantum telecloning, such that the teleportation fidelity for each receiver is generally non-equivalent~\cite{Ghiu_2003,Ferraro_2005,Chen_2007,Das_2024}. Meanwhile, the inverse of quantum telecloning, known as \textit{remote information concentration} (that is, where $M$ clones of a quantum state are remotely purified into a single system), is a related topic of study~\cite{Murao_RIC,Yu_RIC,Wang_RIC} alongside quantum broadcasting~\cite{Barnum_1996,Buzek_broadcast}.

Recently, $1\to M$ cloning and telecloning protocols have been experimentally demonstrated for qubits~\cite{Yang_2021,Pelofske_2022,Pelofske_2024_IEEE,Wen_2026} and continuous-variable systems~\cite{Wang_2021,Lou_2024}. Meanwhile a (potentially sub-optimal) port-based version of telecloning has been theorized~\cite{Okada_2025}. Following the original $1\to M$ telecloning protocol~\cite{Murao}, one might suppose that the natural next step would be to develop an $N\to M$ telecloning protocol, and indeed, this was done shortly thereafter in Ref.~\cite{Dur_1999}. However, to the best of our knowledge, this is the only $N\to M$ telecloning protocol that currently exists in the literature --- excluding a continuous-variable $N+N \to M+M$ telecloning protocol producing $M$ anticlones along with the $M$ clones~\cite{Zhang_2008_cont}. Moreover, it has never been implemented experimentally despite relying on the same resource state as the original $1\to M$ protocol. One possible reason for this is that the protocol requires a global positive operator-valued measure (POVM) to be performed on a $2N$-qubit system comprising all $N$ copies and $N$ auxiliary qubits. Not only is this experimentally challenging, but it also requires all the copies (and auxiliaries) to be situated at the same node of a quantum network. Furthermore, finite-parameter POVMs are currently only known for $N=2,3$, with the number of elements scaling rapidly as $N$ increases.

Here, we introduce an alternative, qubit-based $N\to M$ telecloning protocol that uses only sequential (or, equivalently, parallel) Bell state measurements (BSMs) performed on a series of copy-auxiliary systems. An immediate implication of this set-up is that --- unlike the protocol in Ref.~\cite{Dur_1999} --- the copies may be spatially separated and even be situated at distant nodes, allowing for multi-sender collaboration. We prove that a successful protocol will result in a symmetric, universal, and optimal teleportation fidelity for all receivers and for any $N\leq M$. Although our protocol is probabilistic, we show that when the protocol is unsuccessful, Alice and the receivers can infer the unknown state with a level of confidence that scales with $N$ and is uniquely determined by the prior distribution of input states. This means that when the teleportation fidelity may be sub-optimal, the receivers learn about the unknown state. In any case, we show that an `unsuccessful' protocol can simply be considered a state-dependent (yet still symmetric) $N\to M$ telecloning protocol. We also find the surprising result that the probability of success is determined only by the number of copies, being completely independent of the number of receivers. Finally, we examine the necessity and sufficiency (or lack thereof) of differing classes of multipartite entanglement for the telecloning of qubits. We ultimately uncover that, while pairwise entanglement is necessary for the standard $1\to M$ telecloning protocol, it is generally unnecessary for our $N\to M$ protocol when $N\geq 2$.

The remainder of this manuscript is organized as follows. We begin by reviewing the symmetric $1\to M$ and $N\to M$ telecloning of qubits in Sec.~\ref{sec: preliminaries}. Our symmetric $N\to M$ protocol is then outlined in Sec.~\ref{sec: protocol}, followed by proofs of the optimal cloning fidelity and the $M$-independent success probability. In Sec.~\ref{sec: unsuccessful}, we investigate the performance of the unsuccessful protocols, and explore how they may simply be considered state-dependent $N\to M$ telecloning protocols. We then quantify the level of information gain from these protocols, characterize its relationship with the number of copies and show an explicit example with Pauli eigenstates. The dependence of both the $1\to M$ and $N\to M$ protocols on entanglement is then scrutinized in Sec.~\ref{sec: preconditions}, testing the necessity of certain entanglement classes in this multipartite setting for outperforming natural benchmarks. In Sec.~\ref{sec: conclusions} we offer our concluding remarks, before finally providing an outlook on potential future avenues of research in Sec.~\ref{sec: outlook}.

\section{Preliminaries}\label{sec: preliminaries}
\subsection{Symmetric $1\to M$ telecloning}\label{sec: 1 to M telecloning protocol}
The original $1\to M$ quantum telecloning protocol~\cite{Murao} employs optimal universal quantum cloning (UQC) \textit{machines}~\cite{Buzek_1996,Buzek_1997,Buzek_1998}. Specifically, the class of UQC machines of the $N\to M$ type \cite{Gisin,Werner_1998,Zanardi_1998} are machines $U_{NM}$ that take $N$ identical copies of the arbitrary input state $\ket{\varphi}_X=\alpha\ket{0}+\beta\ket{1}$ (with $\abs{\alpha}^2+\abs{\beta}^2=1$) and output $M\geq N$ optimal clones of such a state, each with associated reduced density matrix
\begin{equation}\label{eq: cloned state}
    \rho_{\mathrm{clone}}(\ket{\varphi}_X) = \gamma\ketbra{\varphi}{\varphi}_X + (1-\gamma)\ketbra{\varphi^\perp}{\varphi^\perp}_X\,,
\end{equation}
    with $\braket{\varphi}{\varphi^\perp}=0$ and
\begin{equation}\label{eq: cloning fidelity factor}
    \gamma(N,M) = \frac{M(N+1)+N}{M(N+2)}\,.
    \end{equation} 
This is the qubit case of the qudit generalization derived in Ref.~\cite{Werner_1998}.

The protocol then proceeds as follows. Alice and the $M$ receivers prepare an entangled $2M$-qubit resource state $\ket{\psi_{\mathrm{T}}}$, defined as
\begin{equation}\label{eq: telecloning resource state}
\begin{split}
    \ket{\psi_{\mathrm{T}}} &= \frac{1}{\sqrt{M+1}}\sum_{j=0}^M \ket{D_j^M}_{PA}\ket{D_j^M}_C\\
    % \frac{1}{\sqrt{M+1}}\sum_{j=0}^M \ket{j}_{PA}^{M}\ket{j}_C^M\\
    &=\frac{1}{\sqrt{2}}\left(\ket{0}_P \otimes \ket{\omega_0}_{AC} + \ket{1}_P\otimes\ket{\omega_1}_{AC}\right)\,,
\end{split}
\end{equation}
with
\begin{align}
    \ket{\omega_0}_{AC} &= \sum_{j=0}^{M-1} \sqrt{\frac{2(M-j)}{M(M+1)}} \ket{D_j^{M-1}}_A\ket{D_j^M}_C\\
    % &= \sum_{j=0}^{M-1} \sqrt{\frac{2(M-j)}{M(M+1)}} \ket{j}_A^{M-1}\ket{j}_C^M\\
    \ket{\omega_1}_{AC} &= \sum_{j=0}^{M-1} \sqrt{\frac{2(M-j)}{M(M+1)}} \big|{D_{M-1-j}^{M-1}}\big\rangle_A\ket{D_{M-j}^M}_C\,.
    % \ket{\omega_1}_{AC} &= \sum_{j=0}^{M-1} \sqrt{\frac{2(M-j)}{M(M+1)}} \ket{M-1-j}_A^{M-1}\ket{M-j}_C^M\,.
\end{align}
Here, we define $\ket{D_j^M}$ as the $M$-qubit Dicke state containing exactly $j$ ones, which can be written as
\begin{equation}
    \ket{D_j^M} = \left( \begin{array}{c}
        M \\
        j 
    \end{array} \right)^{-\frac{1}{2}} \sum_i \mathcal{P}_i \left( \ket{0}^{\otimes (M-j)} \otimes \ket{1}^{\otimes j} \right)\,,
\end{equation}
where $\sum_i \mathcal{P}_i ()$ denotes the sum over all distinct permutations of $M-j$ zeros and $j$ ones.

We denote the \textit{port qubit} as $P$, which Alice holds along with the unknown qubit $X$ and the auxiliary system $A=A_1\cdots A_{M-1}$. Meanwhile, the system $C=C_1\cdots C_M$ represents the $M$ receivers. The $(2M+1)$-qubit state vector of the entire system is therefore $\ket{\psi}_{XPAC} = \ket{\varphi}_X \otimes \ket{\psi_{\mathrm{T}}}$, but we can rewrite this state in terms of the Bell basis of the $XP$ compound as
\begin{equation}\label{eq: XPAC pure state}
\begin{split}
    \ket{\psi}_{XPAC} = \frac{1}{2}\bigg[&\ket{\Phi^+}_{XP}\otimes\left(\alpha\ket{\omega_0}_{AC} + \beta\ket{\omega_1}_{AC}\right)\\ 
    &+ \ket{\Phi^-}_{XP}\otimes\left(\alpha\ket{\omega_0}_{AC} - \beta\ket{\omega_1}_{AC}\right) \\
    &+ \ket{\Psi^+}_{XP}\otimes\left(\beta\ket{\omega_0}_{AC} + \alpha\ket{\omega_1}_{AC}\right) \\
    &+ \ket{\Psi^-}_{XP}\otimes\left(\beta\ket{\omega_0}_{AC} - \alpha\ket{\omega_1}_{AC}\right)\bigg]\,,
\end{split}
\end{equation}
with the usual Bell states
\begin{align}
    \ket{\Phi^\pm} &= \frac{1}{\sqrt{2}}\left(\ket{00}\pm\ket{11}\right)\\
    \ket{\Psi^\pm} &= \frac{1}{\sqrt{2}}\left(\ket{01}\pm\ket{10}\right)\,.
\end{align}
Alice then performs a projective measurement on $XP$ in the Bell basis, before then classically communicating the outcome to all receivers. Each receiver then applies an outcome-dependent unitary operation $U^{XP}$ on their respective qubit according to Table~\ref{tab: telecloning}.
\begin{table}[t]
        \centering
        \caption{Summary of the required unitary operation $U^{XP}$ that each receiver must apply to their qubit to obtain an optimal copy of $\ket{\varphi}_X$, depending on Alice's BSM outcome. Here, $\Id$ is the identity matrix and $\sigma_{x,y,z}$ are the standard Pauli matrices.}
        \label{tab: telecloning}
        \begin{tabular}{c||c}
        Alice's BSM Outcome  & Unitary Operation $U^{XP}$ \\
        \hline
        \hline
        $\ket{\Phi^+}_{XP}$     & $\Id$ \\
        $\ket{\Phi^-}_{XP}$     & $ \sigma_z$ \\
        $\ket{\Psi^+}_{XP}$     & $ \sigma_x$ \\
        $\ket{\Psi^-}_{XP}$     & $ \sigma_y$
        \end{tabular}
\end{table}
The reduced state of each receiver is now $\rho_{\mathrm{clone}}(\ket{\varphi}_X)$, with a fidelity $\gamma(1,M)$. The entanglement structure of $\ket{\psi_{\mathrm{T}}}$ is shown in Fig.~\ref{fig: entanglement structure} for $M=3$.

\subsection{Symmetric $N\to M$ telecloning through global POVMs}
In Ref.~\cite{Dur_1999}, a deterministic protocol is introduced whereby Alice can increase the fidelity of each of the receivers' reduced states to $\gamma(N,M)$ if she possesses $N$ copies of the unknown state $\ket{\varphi}_X$, with $2\leq N \leq M$. The principle of this protocol is similar to that of the original $1\to M$ protocol; the resource state is the same, but the nature of Alice's measurement is more complex. Here, a global POVM must be performed on the subsystem comprising all copies $X_1\cdots X_N$, the port qubit $P$, and the first $N-1$ auxiliary qubits $A_1\cdots A_{N-1}$. While this global measurement has been generalized in terms of $N$, such POVMs are defined in terms of continuous parameters, and so require infinitely many bits to classically communicate perfectly. For $N=2,3$, these POVMs have been equivalently defined using a finite number of parameters~\cite{Dur_1999}, thus significantly reducing the classical communication requirements. However, to the best of our knowledge, no generalization in terms of $N$ currently exists for these finite POVMs, and it is not an immediately trivial calculation. Furthermore, the total number of POVM elements scales at least as $N^5$.

Given the global nature of these generalized, many-element measurements made on a system comprising $2N$ qubits, it is clear to see why this protocol has not yet been implemented experimentally~\footnote{We note, however, that the distribution of the multipartite entangled resource state remains one of the most significant hurdles for all telecloning protocols.}. In the next section, we introduce a protocol which resolves this issue by leveraging only projective two-qubit BSMs, at the cost of its deterministic nature. An additional advantage of the following proposed method is that the auxiliary qubits may therefore be spatially separated.
\begin{figure}
    \centering
    \begin{tikzpicture}[node distance={28mm}, very thick, decoration={
    markings,
    mark=at position 0.5 with {\arrow{>}}}, main/.style = {draw, circle}] 
\node[main] (1) {\,$P$\,};
\node[main] (2) [right of=1] {$C_1$};
\node[main] (3) [below of=1] {$A_1$};
\node[main] (4) [below of=2] {$C_2$};
\node[main] (5) [below of=3] {$A_2$};
\node[main] (6) [below of=4] {$C_3$};
\draw[-] (1) -- (2);
\draw[-] (1) -- (4);
\draw[-] (1) -- (6);
\draw[-] (3) -- (2);
\draw[-] (3) -- (4);
\draw[-] (3) -- (6);
\draw[-] (5) -- (2);
\draw[-] (5) -- (4);
\draw[-] (5) -- (6);
\end{tikzpicture}\ \ \ \ \ \ \ \ \ \ \ \ \ \ \ \ \ 
\begin{tikzpicture}[node distance={28mm}, very thick, decoration={
    markings,
    mark=at position 0.5 with {\arrow{>}}}, main/.style = {draw, circle}]
\node[main] (1) {$C_1$};
\node[main] (2) [right of=1] {$A_1$};
\node[main] (3) [below of=1] {$C_2$};
\node[main] (4) [below of=2] {\,$P$\,};
\node[main] (5) [below of=3] {$C_3$};
\node[main] (6) [below of=4] {$A_2$};
\draw[-] (1) -- (2);
\draw[-] (1) -- (4);
\draw[-] (1) -- (6);
\draw[-] (3) -- (2);
\draw[-] (3) -- (4);
\draw[-] (3) -- (6);
\draw[-] (5) -- (2);
\draw[-] (5) -- (4);
\draw[-] (5) -- (6);
\end{tikzpicture}
    \caption{Entanglement structure in the $\ket{\psi_{\mathrm{T}}}$ state for the $M=3$ protocol~\cite{Murao}. The presence of two-qubit entanglement is indicated by a solid black line between nodes. \textbf{Left:} Alice's system $PA$ is shown on the left-hand side, while the receivers $C$ are pictured on the right-hand side. \textbf{Right:} To emphasize the permutation-invariance of the qubits within $\ket{\psi_{\mathrm{T}}}$, we display an alternative arrangement with a different qubit chosen as the port. For any choice of $P$, the qubits on the other side of the diagram become the receivers.}
    \label{fig: entanglement structure}
\end{figure}
\section{Symmetric $N\to M$ telecloning}\label{sec: protocol}
Let us now introduce our symmetric $N\to M$ telecloning protocol, assuming a single sender (though this method is equivalently applicable for up to $N$ senders, each with their own copy of the unknown state). The sender(s) and receivers proceed as follows. (i) Alice and the receivers perform the symmetric $1\to M$ telecloning protocol as outlined in Sec.~\ref{sec: 1 to M telecloning protocol}. (ii) The unitary operations applied to system $C$ (as dictated by Table~\ref{tab: telecloning}) must also be applied to each of the qubits in the auxiliary system $A$. This transforms the reduced state of $AC$ to
\begin{equation}\label{eq: 1 to M transformed state}
\begin{split}
    \ket{\psi_1}_{AC} &= \alpha\ket{\omega_0}_{AC} + \beta\ket{\omega_1}_{AC}\,.
\end{split}
\end{equation}
(iii) Alice then performs BSMs on the subsystems $X_kA_{k-1}$ (with $k\in\{2,\cdots,N\}$), postselecting on outcome $\ket{\Phi^+}_{X_kA_{k-1}}$ for each. Note that, due to the symmetry of the state, it does not matter which of the auxiliary qubits are used for these measurements since they are invariant under permutation, and although we assume sequential BSMs, these measurements commute and can be performed in parallel if so desired. Thus, there is no loss of generality in assuming they are measured in order from $A_1$ to $A_{N-1}$. No local unitary operations are needed for this final step. We show the exemplary $3\to4$ protocol in Fig.~\ref{fig: 3 to 4 protocol}.
\begin{figure}
\centering
    \begin{tikzpicture}[node distance={25mm}, very thick, decoration={
    markings,
    mark=at position 0.5 with {\arrow{>}}}, main/.style = {draw, circle}] 
\node[main] (1) {\,$P$\,};
\node[main] (2) [right of=1] {$C_1$};
\node[main] (3) [below of=1] {$A_1$};
\node[main] (4) [below of=2] {$C_2$};
\node[main] (5) [below of=3] {$A_2$};
\node[main] (6) [below of=4] {$C_3$};
\node[main] (7) [left of=1] {$X_1$};
\node[main] (8) [below of=7] {$X_2$};
\node[main] (9) [below of=8] {$X_3$};
\node[main] (10) [below of=5] {$A_3$};
\node[main] (11) [below of=6] {$C_4$};
\draw[-] (1) -- (2);
\draw[-] (1) -- (4);
\draw[-] (1) -- (6);
\draw[-] (1) -- (11);
\draw[-] (3) -- (2);
\draw[-] (3) -- (4);
\draw[-] (3) -- (6);
\draw[-] (3) -- (11);
\draw[-] (5) -- (2);
\draw[-] (5) -- (4);
\draw[-] (5) -- (6);
\draw[-] (5) -- (11);
\draw[-] (10) -- (2);
\draw[-] (10) -- (4);
\draw[-] (10) -- (6);
\draw[-] (10) -- (11);
\draw[red, dashed] (7) -- node[midway, above] {\text{BSM+LOCC}} (1);
\draw[red, dashed] (8) -- node[midway, above] {\text{BSM}} node[midway, below] {$p_2=\frac{1}{3}$} (3);
\draw[red, dashed] (9) -- node[midway, above] {\text{BSM}} node[midway, below] {$p_3=\frac{3}{8}$} (5);
\end{tikzpicture}
\caption{Diagrammatic representation of the $3\to4$ telecloning protocol. A red dashed line represents a BSM. After the first BSM, LOCC needs to be performed on each qubit in system $AC$ according to Table~\ref{tab: telecloning} to ensure unit success probability. For the $k$th successful BSM, the resulting fidelity will be $\gamma(k,4)$.}
    \label{fig: 3 to 4 protocol}
\end{figure}

This protocol is clearly probabilistic, since it requires the outcome $\ket{\Phi^+}_{X_k A_{k-1}}$ for the $k$th iteration. Before showing the resulting teleportation fidelity for each receiver, let us first find the success probability. Henceforth, we only denote a BSM as \textit{successful} if an outcome of $\ket{\Phi^+}$ is obtained. Assuming $k-1$ prior successful iterations, the normalized state after the $k$th successful BSM is
\begin{equation}\label{eq: normalized state with general probablity}
\begin{split}
    \ket{\psi_k}_{A^\prime C}=&\left(\prod_{i=1}^{k} \frac{1}{\sqrt{p_i}}\right) \sum_{t=0}^{k}\sum_{j=0}^{M-t} \sqrt{\frac{(M-j)_{t}^\downarrow (j)_{k-t}^\downarrow}{2^{k}(M+1)_{k+1}^\downarrow}}\\
    % &\ \ \ \ \ \ \ \times\begin{pmatrix}
    %     k\\
    %     t
    % \end{pmatrix}
    % \alpha^{k-t}\beta^t\ket{M-t-j}_{A^\prime}^{M-k}\ket{M-j}_C^M\,.
    &\times\begin{pmatrix}
        k\\
        t
    \end{pmatrix}
    \alpha^{k-t}\beta^t\big|{D_{M-t-j}^{M-k}}\big\rangle_{A^\prime}\ket{D_{M-j}^{M}}_C\,,
\end{split}
\end{equation}
where $(x)_{y}^\downarrow=x(x-1)\cdots (x-y+1)$ is a \textit{falling factorial}, $A^\prime$ represents the remaining unmeasured auxiliary qubits ($A_{k}\cdots A_{M-1}$ in this case), and $p_i$ is the probability of the $i$th BSM yielding outcome $\ket{\Phi^+}_{X_i A_{i-1}}$. We include a brief derivation of the above state in Appendix~\ref{app: derivation of state}. We find the surprising result that, not only is the success probability of a given BSM independent of $M$, but also that this probability actually increases with each successful iteration.
\begin{theorem}\label{theorem: probability}
    The probability $p_k$ of observing outcome $\ket{\Phi^+}_{X_kA_{k-1}}$ for the $k$th BSM (assuming all previous BSMs have been successful) is independent of $M$ for all $2\leq k\leq N\leq M$, and is given by
    \begin{equation}
        p_k = \frac{k}{2(k+1)}\,.
    \end{equation}
\end{theorem}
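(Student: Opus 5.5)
We will get $p_k$ directly from the normalization of the unnormalized post-measurement state, using the closed form \eqref{eq: normalized state with general probablity} for $\ket{\psi_k}_{A'C}$ (derived in Appendix~\ref{app: derivation of state}). Since $\ket{\psi_k}$ is normalized, its squared norm equals $1$. The states $\ket{D^{M-k}_{M-t-j}}_{A'}\ket{D^M_{M-j}}_C$ are orthonormal for distinct pairs $(t,j)$, because different $j$ give orthogonal $C$-states and, for fixed $j$, different $t$ give orthogonal $A'$-states. This gives
\begin{equation*}
\prod_{i=1}^k p_i=\sum_{t=0}^{k}\binom{k}{t}^2|\alpha|^{2(k-t)}|\beta|^{2t}\,\frac{S_{k,t}}{2^k(M+1)^\downarrow_{k+1}},
\end{equation*}
where
\begin{equation*}
S_{k,t}=\sum_{j}(M-j)^\downarrow_t\,(j)^\downarrow_{k-t}.
\end{equation*}
Then $p_k=\Pi_k/\Pi_{k-1}$, where $\Pi_k$ denotes this product. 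The base case $\Pi_1=1$ holds because the first BSM followed by the corrections of Table~\ref{tab: telecloning} is deterministic.

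The key step is to evaluate $S_{k,t}$. Write each falling factorial as a binomial coefficient times a factorial: $(M-j)^\downarrow_t=t!\binom{M-j}{t}$ and $(j)^\downarrow_{k-t}=(k-t)!\binom{j}{k-t}$. The Chu--Vandermonde-type identity $\sum_{j}\binom{j}{a}\binom{M-j}{b}=\binom{M+1}{a+b+1}$ then gives
\begin{equation*}
S_{k,t}=t!(k-t)!\binom{M+1}{k+1}=\frac{t!(k-t)!}{(k+1)!}(M+1)^\downarrow_{k+1}.
\end{equation*}
The range $0\le j\le M-t$ is harmless, since terms outside it vanish. Substituting, the $(M+1)^\downarrow_{k+1}$ factors cancel, so all $M$-dependence disappears. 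We are left with
\begin{equation*}
\Pi_k=\frac{1}{2^k(k+1)}\sum_{t=0}^k\binom{k}{t}|\alpha|^{2(k-t)}|\beta|^{2t}=\frac{(|\alpha|^2+|\beta|^2)^k}{2^k(k+1)}=\frac{1}{2^k(k+1)}.
\end{equation*}

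This seems inconsistent with $\Pi_1=1$. The likely reason is that \eqref{eq: normalized state with general probablity} implicitly counts the first (deterministic, corrected) teleportation step differently. For example, the $1/2^k$ may include the factor of $1/2$ from the first BSM, and the $k+1$ may reflect the single-copy $\ket{\omega}$ normalization. So we will fix conventions by checking $k=1$ against \eqref{eq: 1 to M transformed state}, and then take ratios. The ratios are convention-independent up to this constant: $\Pi_k/\Pi_{k-1}=\frac{1}{2}\cdot\frac{k}{k+1}=\frac{k}{2(k+1)}$, which is the claim. As a sanity check, this gives $p_2=1/3$ and $p_3=3/8$, matching Fig.~\ref{fig: 3 to 4 protocol}.

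The main obstacle is the bookkeeping. First, we must confirm that the distinct $(t,j)$ terms are genuinely orthogonal. If $t\neq t'$ and $M-t-j=M-t'-j'$, the terms have the same $A'$-state but different $j$, so the $C$-parts are orthogonal and the claim holds. Second, we must pin down the normalization convention at $k=1$ so that the ratio argument is airtight. If the appendix formula is not trusted, an alternative is induction: project $\ket{\psi_{k-1}}\otimes\ket{\varphi}_{X_k}$ onto $\ket{\Phi^+}_{X_kA_{k-1}}$, using $\bra{\Phi^+}_{XA}\ket{\varphi}_X\ket{a}_A=\tfrac{1}{\sqrt2}\langle\bar\varphi|a\rangle$. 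Then compute the norm directly via the same Vandermonde sum.
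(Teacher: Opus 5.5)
Your proposal is correct and is essentially the paper's argument. The paper also squares the coefficients of Eq.~\eqref{eq: normalized state with general probablity} (implicitly using orthogonality of the Dicke products), removes the $j$-dependence with the upside-down Chu--Vandermonde identity, and collapses the $t$-sum with the binomial theorem to get $\prod_{i=1}^{k}p_i = 1/[2^k(k+1)]$; it phrases this as a recursion for $\tilde p_{k+1}$ and then divides.

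The ``inconsistency'' you flag is not real. That formula is normalized with $p_1 = 1/4$, the bare probability of $\ket{\Phi^+}_{XP}$ at the first BSM, and the paper takes this explicitly as its base case. So your computed $\Pi_1 = 1/4$ is exactly right, and the assumed $\Pi_1 = 1$ should simply be dropped.
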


\textbf{Proof.} See Appendix~\ref{app: proof of theorem 1}.

After $k$ successful iterations, the normalized state (after some simplification) is
\begin{equation}\label{eq: k successful state}
\begin{split}
    \ket{\psi_k}_{A^\prime C}=&\sqrt{\frac{k+1}{(M+1)_{k+1}^\downarrow}} \sum_{t=0}^{k}\sum_{j=0}^{M-t} \sqrt{(M-j)_{t}^\downarrow (j)_{k-t}^\downarrow}\\
    % &\ \ \ \ \ \ \ \times\begin{pmatrix}
    %     k\\
    %     t
    % \end{pmatrix}
    % \alpha^{k-t}\beta^t\ket{M-t-j}_{A^\prime}^{M-k}\ket{M-j}_C^M\,.
    &\times\begin{pmatrix}
        k\\
        t
    \end{pmatrix}
    \alpha^{k-t}\beta^t\big|{D_{M-t-j}^{M-k}}\big\rangle_{A^\prime}\ket{D_{M-j}^{M}}_C\,.
\end{split}
\end{equation}
Given the intrinsic symmetry of this state, the reduced states of the receivers are identical; thus, considering only the teleportation fidelity of receiver $C_1$ is sufficient. To find the reduced state of $C_1$ (and thus the fidelity of $C_1$), it is first helpful to show that the protocol is \textit{covariant}. That is to say, we wish to show that replacing an input state $\ket{\varphi}_X=\alpha\ket{0}+\beta\ket{1}$ with $U\ket{\varphi}_X$ (for arbitrary unitary matrix $U$) means that the reduced states of the receivers are also rotated by $U$. Formally, we aim to show that
\begin{equation}
    \ket{\psi_N(U\ket{\varphi})}_{AC} = \left(E_{A} \otimes U_C^{\otimes M}\right)\ket{\psi_N(\ket{\varphi})}_{AC}\,,
\end{equation}
where $E$ is some operator that we are yet to determine.
\begin{lemma}\label{lemma: covariance}
    When successful, the symmetric $N\to M$ telecloning protocol is covariant for all $1\leq N\leq M$.
\end{lemma}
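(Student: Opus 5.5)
The plan is to establish covariance through the $U\otimes\bar U$ invariance of the resource state, rather than by manipulating the explicit expansion in Eq.~\eqref{eq: k successful state}. Because of step (ii), the state of $AC$ after the first BSM and its correction is $\ket{\psi_1}_{AC}=\alpha\ket{\omega_0}+\beta\ket{\omega_1}$, as in Eq.~\eqref{eq: 1 to M transformed state}. This is exactly the (unnormalized) $\ket{\Phi^+}_{X_1P}$ branch of Eq.~\eqref{eq: XPAC pure state}.

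So, for every outcome of the first BSM, the successful $N$-copy output is proportional to
\begin{equation*}
\ket{\tilde\psi_N(\ket{\varphi})}_{A'C}=\Big(\bigotimes_{k=1}^{N}\bra{\Phi^+}_{X_kA_{k-1}}\Big)\ket{\varphi}^{\otimes N}_{X}\ket{\psi_{\mathrm{T}}}\,,
\end{equation*}
where $A_0\equiv P$ and $A'=A_N\cdots A_{M-1}$. By Theorem~\ref{theorem: probability} (and $p_1=1/4$ for the first outcome), the total success probability $\prod_k p_k$ does not depend on $\ket{\varphi}$. The normalization constant is therefore the same for $\ket{\varphi}$ and $U\ket{\varphi}$, and it suffices to prove covariance for the unnormalized vector $\ket{\tilde\psi_N}$.

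The key steps, in order, are as follows.

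First, I will show that $\big(\bar U^{\otimes M}_{PA}\otimes U^{\otimes M}_C\big)\ket{\psi_{\mathrm{T}}}=\ket{\psi_{\mathrm{T}}}$ for every unitary $U$. Up to normalization, $\ket{\psi_{\mathrm{T}}}=\sum_j\ket{D_j^M}\ket{D_j^M}$ is $(\Pi_{\rm sym}\otimes\Pi_{\rm sym})\ket{\Phi^+}^{\otimes M}$ with the pairs $(PA)_i$--$C_i$ grouped. The symmetrizers commute with $\bar U^{\otimes M}\otimes U^{\otimes M}$, and each $\ket{\Phi^+}$ satisfies $(\bar U\otimes U)\ket{\Phi^+}=\ket{\Phi^+}$. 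The identity $\Pi_{\rm sym}\otimes\Pi_{\rm sym}$ acting on $\ket{\Phi^+}^{\otimes M}$ giving the Dicke sum, which is equivalently $(\Pi_{\rm sym}\otimes \Id)\ket{\Phi^+}^{\otimes M}$, is a short check.

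Second, I will use the transpose trick $\bra{\Phi^+}(U\otimes\Id)=\bra{\Phi^+}(\Id\otimes U^{T})$ to move each $U$ acting on $X_k$ onto $A_{k-1}$. This gives
\begin{equation*}
\ket{\tilde\psi_N(U\ket{\varphi})}=\Big(\bigotimes_k\bra{\Phi^+}_{X_kA_{k-1}}\Big)\ket{\varphi}^{\otimes N}\,\big((U^{T})^{\otimes N}_{PA_1\cdots A_{N-1}}\otimes\Id\big)\ket{\psi_{\mathrm{T}}}\,.
\end{equation*}

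Third, I will complete the operator to all $M$ qubits of $PA$ by writing $(U^{T})^{\otimes N}\otimes\Id_{A'}=(U^{T})^{\otimes M}\,(\Id\otimes\bar U^{\otimes(M-N)}_{A'})$, which uses $(U^T)^{-1}=\bar U$. I will then insert the invariance from the first step to obtain $(U^T)^{\otimes M}_{PA}\ket{\psi_{\mathrm{T}}}=(U^T\bar U)^{\otimes M}U^{\otimes M}_C\ket{\psi_{\mathrm{T}}}=U^{\otimes M}_C\ket{\psi_{\mathrm{T}}}$. The operator $\bar U^{\otimes(M-N)}$ acts on the unmeasured qubits $A'$, and $U_C^{\otimes M}$ acts only on $C$; both commute with the Bell projections on $X_kA_{k-1}$ and can be pulled outside. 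This yields the claimed relation with $E_{A'}=\bar U^{\otimes(M-N)}$, which is unitary. The case $N=M$ gives $E$ trivial, and $N=1$ recovers the standard covariance of the $1\to M$ protocol.

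I expect the main obstacle to be the bookkeeping of the first BSM together with the order of operations on $A$. Specifically, I need to check that applying the Pauli corrections to $A$ in step (ii) really does map every outcome branch onto the $\ket{\Phi^+}$ branch exactly, including global phases such as those arising for $\sigma_y$. I would verify this directly from Eq.~\eqref{eq: XPAC pure state}, using that $\sigma^{\otimes(2M-1)}$ maps $\ket{\omega_0},\ket{\omega_1}$ to the required combinations. Any leftover global phase is irrelevant to the reduced states, so it does not affect covariance.
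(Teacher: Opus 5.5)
Your proof is correct and uses the same two ingredients as the paper's: the invariance $(\bar U^{\otimes M}_{PA}\otimes U^{\otimes M}_C)\ket{\psi_{\mathrm{T}}}=\ket{\psi_{\mathrm{T}}}$ and the transpose trick through $\bra{\Phi^+}$. It arrives at the same $E_{A'}=\bar U^{\otimes(M-N)}$; the paper applies these by induction one BSM at a time, while you move all $N$ transposes onto $PA$ at once and pad with $U^T\bar U=\Id$ on $A'$, which is the same argument unrolled. Your symmetrizer derivation of the invariance and your explicit check of the first-outcome Pauli corrections fill in steps the paper only asserts, and you can drop the appeal to Theorem~\ref{theorem: probability} for normalization, since $\bar U^{\otimes(M-N)}\otimes U^{\otimes M}$ is unitary and so automatically preserves the norm of the unnormalized vector.
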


\textbf{Proof.} See Appendix~\ref{app: proof of lemma 2}.

\begin{theorem}\label{theorem: N to M fidelity}
    Upon successful implementation of the protocol for telecloning $N$ copies of an unknown state $\ket{\varphi}_X$ to $M$ receivers, the telecloning fidelity $\mathcal{F}_{C_i}$ of each receiver's reduced state universally saturates the bound imposed by the no-cloning theorem, i.e.,
    \begin{equation}
        \mathcal{F}_{C_i} = \gamma(N,M)\,.
    \end{equation}
\end{theorem}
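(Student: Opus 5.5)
By Lemma~\ref{lemma: covariance}, the reduced state of $C_1$ after $N$ successful BSMs satisfies $\rho_{C_1}(U\ket{\varphi}) = U\rho_{C_1}(\ket{\varphi})U^\dagger$ for every unitary $U$. The fidelity $\mathcal{F}_{C_1}=\bra{\varphi}\rho_{C_1}(\ket{\varphi})\ket{\varphi}$ is therefore independent of the input state. So I will compute it only for $\ket{\varphi}=\ket{0}$, i.e.\ $\alpha=1$, $\beta=0$. Universality then follows, and by the permutation symmetry of Eq.~\eqref{eq: k successful state} in the $C$ qubits, every receiver has the same fidelity.

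Setting $\beta=0$ in Eq.~\eqref{eq: k successful state} with $k=N$, only the $t=0$ term survives:
\begin{equation*}
\ket{\psi_N}_{A'C}=\sqrt{\frac{N+1}{(M+1)^\downarrow_{N+1}}}\sum_{j=0}^{M}\sqrt{(j)^\downarrow_N}\,\big|D^{M-N}_{M-j}\big\rangle_{A'}\ket{D^M_{M-j}}_C .
\end{equation*}
The terms with $j<N$ vanish, and the $A'$ Dicke states are orthonormal for distinct $j$. Tracing out $A'$ therefore gives
\begin{equation*}
\rho_C=\frac{N+1}{(M+1)^\downarrow_{N+1}}\sum_{j=N}^{M}(j)^\downarrow_N\,\ketbra{D^M_{M-j}}{D^M_{M-j}} .
\end{equation*}
In $\ket{D^M_{m}}$ the first qubit is $\ket{0}$ with probability $(M-m)/M$. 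With $m=M-j$, this probability is $j/M$, so
\begin{equation*}
\mathcal{F}_{C_1}=\frac{N+1}{M\,(M+1)^\downarrow_{N+1}}\sum_{j=N}^{M} j\,(j)^\downarrow_N .
\end{equation*}

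The remaining step is evaluating this falling-factorial sum, using $\sum_{j}(j)^\downarrow_{n}=(M+1)^\downarrow_{n+1}/(n+1)$ (hockey-stick identity). Write $j(j)^\downarrow_N=(j)^\downarrow_{N+1}+N(j)^\downarrow_N$. Then
\begin{equation*}
\sum_{j=N}^M j(j)^\downarrow_N=\frac{(M+1)^\downarrow_{N+2}}{N+2}+\frac{N(M+1)^\downarrow_{N+1}}{N+1}.
\end{equation*}
Dividing by $(M+1)^\downarrow_{N+1}$ and using $(M+1)^\downarrow_{N+2}/(M+1)^\downarrow_{N+1}=M-N$ gives
\begin{equation*}
\mathcal{F}_{C_1}=\frac{1}{M}\left[\frac{(N+1)(M-N)}{N+2}+N\right]=\frac{M(N+1)+N}{M(N+2)}=\gamma(N,M).
\end{equation*}

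As a sanity check, the normalization $\mathrm{tr}\,\rho_C=1$ follows from the same identity with $n=N$, which confirms the prefactor taken from Theorem~\ref{theorem: probability}. The main risk is index bookkeeping, namely whether $\ket{D^{M-N}_{M-j}}$ and $\ket{D^M_{M-j}}$ pair up correctly at $t=0$. If Eq.~\eqref{eq: k successful state} is instead read with $C$ labelled by $j$ ones, the same computation runs with $0$ and $1$ exchanged, and by covariance the result is unchanged.

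Finally, since $\gamma(N,M)$ is the optimal $N\to M$ universal cloning fidelity~\cite{Werner_1998}, each receiver saturates the no-cloning bound.
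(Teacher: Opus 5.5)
Your proof is correct and follows the paper's argument essentially step for step: covariance (Lemma~\ref{lemma: covariance}) to reduce to $\ket{0}$, orthogonality of the Dicke states to trace out $A'$, then the split $j(j)^\downarrow_N=(j)^\downarrow_{N+1}+N(j)^\downarrow_N$ followed by the hockey-stick identity. The only cosmetic differences are that the paper first reindexes $j\to M-j$ and carries out the computation for every intermediate $k$, whereas you keep the original index and work directly at $k=N$.
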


\textbf{Proof.} With the protocol being covariant (as per Lemma~\ref{lemma: covariance}), the reduced state of $C_1$ is
\begin{equation}
    \rho_{C_1}(U\ket{\varphi}) = U \rho_{C_1}(\ket{\varphi}) U^\dagger\,.
\end{equation}
The fidelity then evaluates to
\begin{equation}
\begin{split}
    \mathcal{F}_{C_1}(U\ket{\varphi}) &= \bra{\varphi}U^\dagger U \rho_{C_1}(\ket{\varphi}) U^\dagger U \ket{\varphi}\\
    &=\bra{\varphi} \rho_{C_1}(\ket{\varphi})  \ket{\varphi} = \mathcal{F}_{C_1}(\ket{\varphi})\,.
\end{split}
\end{equation}
Since any pure state can be represented by $U\ket{\varphi}$ for arbitrary $\varphi$ and $U$, the fidelity must therefore be independent of $\varphi$. We therefore take $\alpha=1,\beta=0$ to significantly simplify the derivation. From Eq.~\eqref{eq: k successful state}, the state after $k$ successful measurements for $\ket{\varphi}=\ket{0}$ is
\begin{equation}
\begin{split}
    % \ket{\psi_k(\ket{0})}_{A^\prime C} &=  \sum_{j=0}^{M}\sqrt{\frac{(k+1)(j)^\downarrow_k}{(M+1)^\downarrow_{k+1}}} \ket{M-j}_{A^\prime}^{M-k}\!\ket{M-j}_C^M\\
    &\ket{\psi_k(\ket{0})}_{A^\prime C} =  \sum_{j=0}^{M}\sqrt{\frac{(k+1)(j)^\downarrow_k}{(M+1)^\downarrow_{k+1}}} \big|{D_{M-j}^{M-k}}\big\rangle_{A^\prime}\ket{D_{M-j}^M}_C\\
    % &= \sum_{j=0}^{M-k}\sqrt{\frac{(k+1)(M-j)^\downarrow_k}{(M+1)^\downarrow_{k+1}}} \ket{j}_{A^\prime}^{M-k}\ket{j}_C^M\,,
    &\ \ \ \ \ \ \ \ \ \ = \sum_{j=0}^{M-k}\sqrt{\frac{(k+1)(M-j)^\downarrow_k}{(M+1)^\downarrow_{k+1}}} \big|{D_j^{M-k}}\big\rangle_{A^\prime}\ket{D_j^{M}}_C
\end{split}
\end{equation}
where the convention $(x)_0^\downarrow=1$ is used. In the second line, we have taken $j\to M-j$ (giving the same summation limits) and subsequently noticed that all factors from $j=M-k+1$ onward are zero, this being a reassuring sanity check since these values are associated with nonphysical states. The density matrix then becomes
\begin{equation}
    \rho_{A^\prime C}^{k}(\ket{0}) = 
    % \sum_{j,j^\prime=0}^{M-k} c_j c_{j^\prime} \ketbra{j}{j^\prime}_{A^\prime}^{M-k} \otimes \ketbra{j}{j^\prime}^M_C\,,
    \sum_{j,j^\prime=0}^{M-k} c_j c_{j^\prime} \big|{D_j^{M-k}}\big\rangle\!\big\langle{D_{j^\prime}^{M-k}}\big|_{A^\prime} \otimes \ketbra{D_j^M}{D_{j^\prime}^M}_C
\end{equation}
with
\begin{equation}
    c_j = \sqrt{\frac{(k+1)(M-j)^\downarrow_k}{(M+1)^\downarrow_{k+1}}}\,.
\end{equation}
Using the Dicke basis $\{\ket{D_q^{M-k}}_{A^\prime}\}_{q=0}^{M-k}$, we can trace over $A^\prime$ to give
\begin{equation}
\begin{split}
    \rho_C^{k}(\ket{0}) 
    % &= \sum_{q,j,j^\prime=0}^{M-k} c_j c_{j^\prime} \braket{q}{j}_{A^\prime}^{M-k}\braket{j^\prime}{q} _{A^\prime}^{M-k} \ketbra{j}{j^\prime}^M_C\\
    &= \sum_{q,j,j^\prime=0}^{M-k} c_j c_{j^\prime} \big\langle{D_q^{M-k}}\big|{D_j^{M-k}}\big\rangle_{A^\prime}\\
    &\ \ \ \ \ \ \ \ \ \ \ \ \ \ \ \times\big\langle{D_{j^\prime}^{M-k}}\big|{D_q^{M-k}}\big\rangle_{A^\prime}\ketbra{D_j^{M}}{D_{j^\prime}^{M}}_C\\
    &= \sum_{j=0}^{M-k} c_j^2 \ketbra{D_j^M}{D_j^M}_C\,.
\end{split}
\end{equation}
Since, for a given value of $j$, the state has $M-j$ zeros and $j$ ones, the reduced state of $C_1$ (and indeed, of any other given receiver) is
\begin{equation}
    \rho_{C_1}^{k}(\ket{0}) = \sum_{j=0}^{M-k} c_j^2\left[\frac{M-j}{M}\ketbra{0}{0}_{C_1} + \frac{j}{M}\ketbra{1}{1}_{C_1}\right]\,,
\end{equation}
giving a fidelity with $\ket{0}$ of
\begin{equation}
    \mathcal{F}_{C_1} = \frac{k+1}{M(M+1)_{k+1}^\downarrow} \sum_{j=0}^{M-k} (M-j)_k^\downarrow (M-j)\,.
\end{equation}
We then notice that we can write $M-j = (M-j-k)+k$ and $(M-j)^\downarrow_{k}\cdot (M-j-k) = (M-j)^\downarrow_{k+1}$ to split the sum as
\begin{equation}
\begin{split}
    \sum_{j=0}^{M-k}&(M-j)^\downarrow_{k}\cdot (M-j)\\
    &= \sum_{j=0}^{M-k}(M-j)^\downarrow_{k+1} + k\sum_{j=0}^{M-k}(M-j)^\downarrow_{k}\\
    &= \sum_{j=0}^{M-k} (k+1)! \begin{pmatrix}
        M-j\\
        k+1
    \end{pmatrix} + k\sum_{j=0}^{M-k}k! \begin{pmatrix}
        M-j\\
        k
    \end{pmatrix} \,.
\end{split}
\end{equation}
After substituting and summing over $l=M-j$, we may employ the \textit{hockey-stick identity}~\cite{Jones_1996}, which is stated as
\begin{equation}
    \sum_{l=u}^v\begin{pmatrix}
        l\\
        u
    \end{pmatrix}
    \equiv 
    \begin{pmatrix}
        v+1\\
        u+1
    \end{pmatrix}
\end{equation}
for $u,v\in \mathbb{N}$ and $v\geq u$. After summing from $l=k$ to $l=M$, we eliminate the $j$-dependence, and so the fidelity simplifies to
\begin{equation}
\begin{split}
    \mathcal{F}_{C_1} &= \frac{1}{M\begin{pmatrix}
        M+1\\
        k+1
    \end{pmatrix}} \left[
    (k+1)\begin{pmatrix}
        M+1\\
        k+2
    \end{pmatrix}
     + k\begin{pmatrix}
        M+1\\
        k+1
    \end{pmatrix}
    \right]\\
    &= \frac{1}{M} \left[
    (k+1)\frac{M-k}{k+2}
     + k
    \right]\\
    &=\frac{M(k+1) + k}{M(k+2)}\\ 
    &= \gamma(k,M)\,.
\end{split}
\end{equation}
Since this is true for all $k\in\{1,\cdots,N\}$, Theorem~\ref{theorem: N to M fidelity} is proved. $\square$

It has recently been shown that optimal $N\to M$ quantum cloning of $d$-dimensional systems (for any $d\geq2$) and optimal $N\to M-N$ state transposition~\cite{Buscemi_2003} are complementary channels~\cite{Brzic_2026}. Thus, the $M-N$ remaining auxiliary qubits in our telecloning protocol are necessarily the output of the optimal $N\to M-N$ transposition channel. In the qubit case, the reduced states of these auxiliaries are \textit{anticlones} of the unknown state $\ket{\varphi}_X$ (up to a local unitary rotation); that is, a noisy version of the state $\ket{\varphi^\perp}_X$. The fidelity of these reduced qubit states with $\ket{\varphi^\perp}_X$ is independent of $M$, and is given by the upper bound of the teleportation fidelity when limited to classical communication (see Sec.~\ref{sec: 1 to M resources}).
 
\section{Unsuccessful outcomes and information gain}\label{sec: unsuccessful}
So far, we have seen that optimal $N\to M$ telecloning can be performed probabilistically. We thus naturally arrive at the question \textit{what happens if a given step of the protocol is unsuccessful?} This section answers this question, both in terms of protocol performance as well as wider implications for remotely inferring $\ket{\varphi}_X$.
\subsection{Performances of unsuccessful outcomes}\label{sec: performances of unsuccessful outcomes}
Following from our definition of a successful BSM, we call the $k$th BSM \textit{unsuccessful} if we observe one of the three outcomes $\ket{\Phi^-}_{X_kA_{k-1}},\ket{\Psi^\pm}_{X_kA_{k-1}}$. We start by looking at the associated probabilities of these outcomes, followed by an analysis of the resulting teleportation fidelities.

Let us start with the outcome $\ket{\Phi^-}_{X_kA_{k-1}}$. Assuming $k-1$ prior successful BSMs, the probability $p_k^{\Phi^-}$ associated with this outcome for the $k$th BSM is
\begin{equation}
    p_{k}^{\Phi^-} =\frac{k \left( \abs{\alpha}^4+ \abs{\beta}^4\right)-2(k-2) \abs{\alpha}^2  \abs{\beta}^2}{2 (k+1)}\ ,\ \forall\, k\geq1\,.
\end{equation}
This probability is clearly dependent on $\ket{\varphi}_X$, and takes a maximum value of $p_k^{\Phi^-}=p_k$ when $\ket{\varphi}_X\in\{\ket{0},\ket{1}\}$. By rewriting the unknown state in the Bloch-sphere parameterization as $\ket{\varphi}_X=\cos\!\frac{\theta}{2}\ket{0}+e^{\ui\phi}\sin\!\frac{\theta}{2}\ket{1}$, we see that the probability is entirely dependent on the polar angle $\theta$ with
\begin{equation}
    p_k^{\Phi^-}(\theta) = \frac{k+1 + (k-1)\cos\!2\theta}{4(k+1)}\,.
\end{equation}
In the limit of $k\to\infty$, this function becomes
\begin{equation}
    \lim_{k\to\infty}p_k^{\Phi^-}(\theta) = \frac{1}{4}(1+\cos\!2\theta)\,,
\end{equation}
and so the probability for polar states converges to $\frac{1}{2}$ (as with $p_k$ for all input states) while the probability of an equatorial state (being the minimum for all $k\geq2$) tends to 0. Since this probability function concentrates around the poles as $k$ increases, a larger $k$ means a greater confidence that both Alice and the receivers can have in $X$ being close to the poles of the Bloch sphere (see Sec.~\ref{sec: quantifying info gain}).

Meanwhile, we see a similar behavior for the probabilities associated with outcomes $\ket{\Psi^+}_{X_kA_{k-1}}$ and $\ket{\Psi^-}_{X_kA_{k-1}}$, but instead for certain equatorial states. These probabilities --- in the Bloch-sphere parameterization --- are given by
\begin{align}\label{eq: prob bloch psi plus}
    p_{k}^{\Psi^+}(\theta,\phi) &= \frac{(k-1)\left( 1+ \cos\!2\phi  -2  \cos^2\!\phi  \cos\!2 \theta  \right) +4}{8 (k+1)}\\
    \label{eq: prob bloch psi minus}
    p_{k}^{\Psi^-}(\theta,\phi) &= \frac{(k-1)\left(1 -\cos\!2\phi  -2  \sin^2\!\phi  \cos\!2 \theta \right) +4}{8 (k+1)}\,.
\end{align}
Both of these probabilities reach a maximum value of $p_k$, corresponding to the eigenstates of the $\sigma_x$ and $\sigma_y$ matrices for outcomes $\ket{\Psi^+}_{X_kA_{k-1}}$ and $\ket{\Psi^-}_{X_kA_{k-1}}$, respectively. The Pauli eigenstates that do not correspond to the maximum probability for an unsuccessful outcome instead have a minimum associated probability on the Bloch sphere surface.

Now, let us inspect the fidelity of the reduced states of the receivers upon a given outcome of the $k$th BSM (assuming all prior BSMs have been successful). For each outcome, we have the normalized states of the full system $A^\prime C$~\footnote{We assert the convention that vanishing binomial coefficients are omitted from the sum. We use this convention throughout this manuscript.} as
\begin{widetext}
\begin{align}
    \begin{split}\label{eq: Phi pm states AC}
    \ket{\psi_k^{\Phi^\pm}}_{A^\prime C}&=\mathcal{A}_{k,M}^{\Phi^\pm}  \sum_{t=0}^{k}\sum_{j=0}^{M-t}\alpha^{k-t}\beta^t\Bigg[\begin{pmatrix}
        k-1\\
        t
    \end{pmatrix} \pm \begin{pmatrix}
        k-1\\
        t-1
    \end{pmatrix}\Bigg]
    \mathcal{B}_{k,M}^{j,t}
    % \ket{M-t-j}_{A^\prime}^{M-k}\ket{M-j}_C^M
    \ket{D_{M-t-j}^{M-k}}_{A^\prime}\ket{D_{M-j}^M}_C
\end{split}\\
\begin{split}\label{eq: Psi pm states AC}
\ket{\psi_k^{\Psi^\pm}}_{A^\prime C} &= \mathcal{A}_{k,M}^{\Psi^\pm} \sum_{t=0}^{k}\sum_{j=0}^{M-t} \left[ \begin{pmatrix}
        k-1\\
        t-1
    \end{pmatrix}\alpha^{k+1-t} \beta^{t-1} \pm \begin{pmatrix}
        k-1\\
        t
    \end{pmatrix}\alpha^{k-1-t}\beta^{t+1} \right]\mathcal{B}_{k,M}^{j,t}
    % \ket{M-t-j}_{A^\prime}^{M-k}\ket{M-j}_C^M\,,
    \big|{D_{M-t-j}^{M-k}}\big\rangle_{A^\prime}\ket{D_{M-j}^M}_C\,,
\end{split}
\end{align}
\end{widetext}
where
\begin{align}
    \mathcal{A}_{k,M}^\Pi = \sqrt{\frac{(k+1)p_k}{(M+1)_{k+1}^\downarrow p_k^{\Pi}}}\ ,\ \mathcal{B}_{k,M}^{j,t} = \sqrt{(M-j)_{t}^\downarrow (j)_{k-t}^\downarrow}
\end{align}
with $p_k^{\Phi^+}=p_k$. Looking at Eq.~\eqref{eq: Phi pm states AC} in particular, it becomes clear as to why the outcome $\ket{\Phi^+}_{X_kA_{k-1}}$ is optimal while the others are generally not. By Pascal's rule $\begin{pmatrix}
        k-1\\
        t
    \end{pmatrix} + \begin{pmatrix}
        k-1\\
        t-1
    \end{pmatrix} = \begin{pmatrix}
        k\\
        t
\end{pmatrix}$, we see that such an outcome preserves the structure of the state as well as the iterability of the protocol, also allowing the teleportation fidelity to remain independent of $\alpha$ and $\beta$. For the unsuccessful outcomes, these features are only exhibited for specific input states. For an analysis of the states in the case of multiple unsuccessful outcomes, see Appendix~\ref{app: multiple outcomes}.
\begin{figure*}
    \centering
    \includegraphics[width=\linewidth]{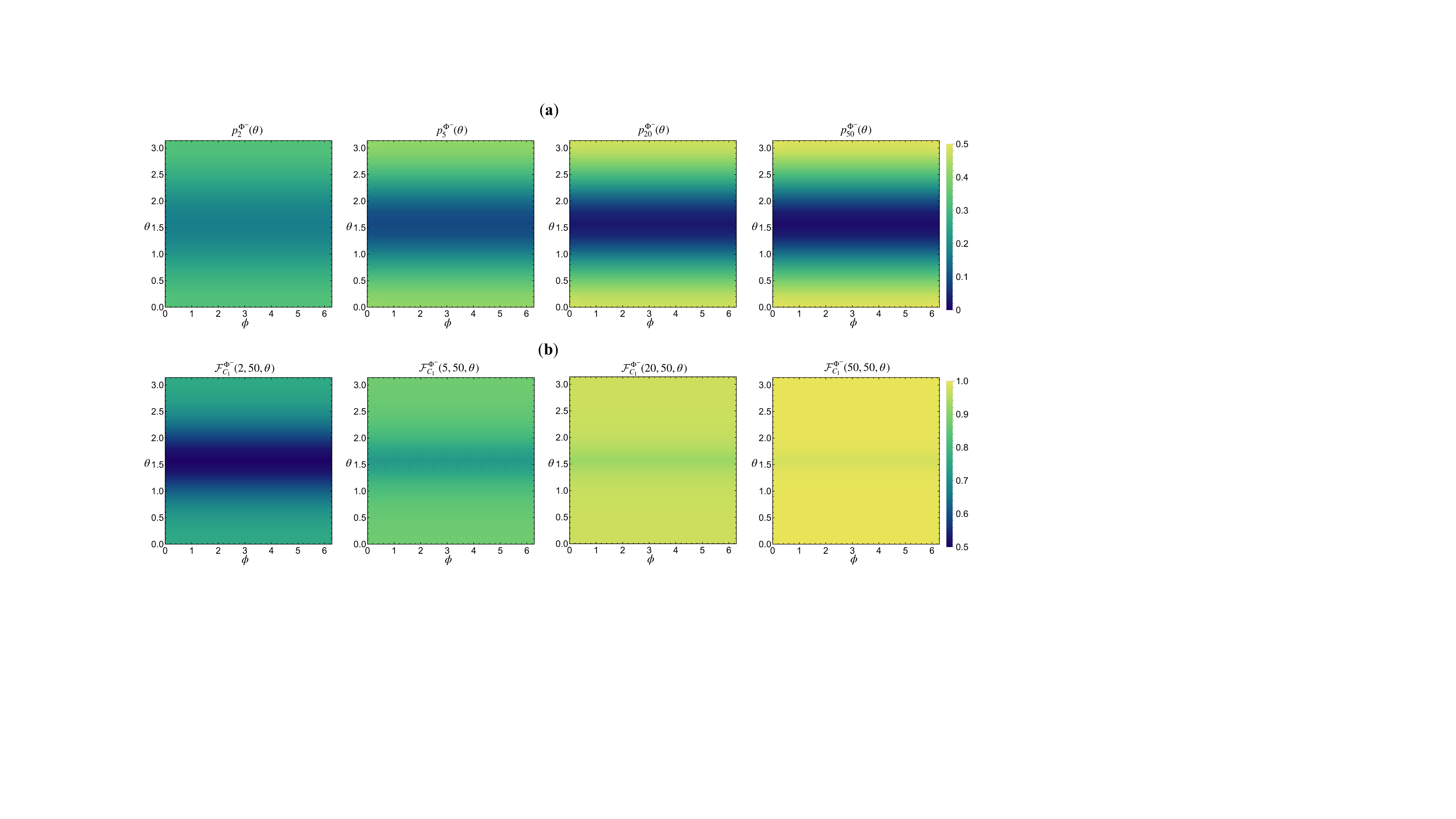}
    \caption{Heatmaps showing the $\ket{\varphi}_X$-dependent probability and fidelity functions --- as a function of the Bloch variables --- when the $k$th BSM outcome is $\ket{\Phi^-}_{X_{k}A_{k-1}}$ (following $k-1$ successful BSMs). Although both functions are independent of $\phi$, we include the azimuth here to help visualize the Bloch sphere and compare to similar plots shown for the $\ket{\Psi^\pm}$ outcomes in Appendix~\ref{app: psi outcomes}. \textbf{(a)} The $M$-independent probability for $k=2,5,20,50$ (from left to right). \textbf{(b)} The fidelity for $M=50$ and $k=2,5,20,50$ (from left to right). We choose a large $M$ such that the progressive state-independence of the fidelity (concurring with the progressive state-dependence of the probability) is well-pronounced. We also notice that the maximum-probability regions always coincide with the maximum-fidelity regions.}
    \label{fig: phi minus heatmaps}
\end{figure*}

We find the $\ket{\varphi}_X$-dependent fidelity functions by tracing over all qubits other than $C_1$ --- a straightforward, yet cumbersome task --- and by then calculating $\mathcal{F}_{C_1}(k,M)=\bra{\varphi}_X \rho_{C_1}^{\tau}(k,M) \ket{\varphi}_X$, where $\rho_{C_1}^{\tau}(k,M)$ is the normalized reduced state of $\rho_{C_1}$ upon BSM outcome $\ket{\tau}_{X_kA_{k-1}}$ (with $\tau\in\{\Phi^\pm,\Psi^\pm\}$). The $\ket{\Phi^-}_{X_kA_{k-1}}$ outcome has an associated fidelity function of
\begin{equation}\label{eq: state dependent phi minus fidelity}
    \mathcal{F}_{C_1}^{\Phi^-}(k,M,\theta) = \frac{(k \mathcal{D}_{k,M}+2) \cos\!2 \theta +(k+2) \mathcal{D}_{k,M}}{M(k+2) [(k-1) \cos\! 2 \theta +k+1]} \,,
\end{equation}
with $\mathcal{D}_{k,M}=k M+k-1$. For any given $\theta$, this function is increasing with $k$, and so for any given input state the fidelity increases with the number of BSMs (and hence the number of copies). For the polar states, Eq.~\eqref{eq: state dependent phi minus fidelity} takes the optimal value of $\mathcal{F}_{C_1}^{\Phi^-}(k,M,0)=\mathcal{F}_{C_1}^{\Phi^-}(k,M,\pi)=\gamma(k,M)$. Indeed, the closer the input state is to the poles, the higher the fidelity, and so we have the useful behavior that the higher the probability of observing $\ket{\Phi^-}_{X_kA_{k-1}}$, the higher the output fidelity of $C_1$ when the outcome $\ket{\Phi^-}_{X_kA_{k-1}}$ is observed.
The minimum value of Eq.~\eqref{eq: state dependent phi minus fidelity} also occurs for the equatorial states, with a value of
\begin{equation}\label{eq: minimum fidelity}
    \mathcal{F}_{C_1}^{\Phi^-}\left(k,M,\frac{\pi}{2}\right)  = \frac{k(M+1)-2}{M(k+2)}\,.
\end{equation}
We may therefore consider an unsuccessful measurement to be an example of a state-dependent quantum telecloning protocol.

Although we have maximum fidelity for the polar states, we see that even the minimum fidelity increases as $k\to M$. In fact, when we have $k\to M$ and $M\to \infty$ concurrently, the fidelity converges to 1 for all possible input states. This implies that, as $k$ increases, the fidelity `flattens' out over the Bloch sphere thus becoming progressively more state-independent, an inverse behavior to the probability function $p_k^{\Phi^-}(\theta)$ which increases in its state dependence. 
We show an extreme example ($M=50$) of how the probability and fidelity functions for outcome $\ket{\Phi^-}_{X_kA_{k-1}}$ evolve with $k$ in Fig.~\ref{fig: phi minus heatmaps}.
The behaviors of the state-dependent fidelity functions for the outcomes $\ket{\Psi^\pm}_{X_kA_{k-1}}$ are analogous to those here, which we detail in Appendix~\ref{app: psi outcomes}.

Having established the state-dependent fidelity functions, we now look to the average resulting fidelity from a given BSM, based on a uniformly random distribution of input states.
\begin{theorem}\label{theorem: average fidelity}
    For a given receiver $C_i$, the average resulting fidelity $\Tilde{\mathcal{F}}_{C_i}(k,M)$ of a given BSM is equal to the optimal postselected fidelity of the previous one, i.e.,
    \begin{equation}
        \Tilde{\mathcal{F}}_{C_i}(k,M) = \gamma(k-1,M)\,.
    \end{equation}
\end{theorem}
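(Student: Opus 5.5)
The plan is to read $\Tilde{\mathcal{F}}_{C_i}(k,M)$ as the outcome-averaged fidelity
\begin{equation*}
\Tilde{\mathcal{F}}_{C_i}(k,M)=\int \mathrm{d}\mu(\varphi)\sum_{\tau\in\{\Phi^\pm,\Psi^\pm\}} p_k^{\tau}(\varphi)\,\bra{\varphi}\rho_{C_i}^{\tau}(k,M)\ket{\varphi}\,,
\end{equation*}
where $\mu$ is the uniform (Haar) measure on input states and $p_k^{\Phi^+}=p_k$. I will not integrate the explicit functions such as Eq.~\eqref{eq: state dependent phi minus fidelity}. Instead I will prove the stronger pointwise statement that, for every fixed $\ket{\varphi}$, the outcome-weighted sum equals $\gamma(k-1,M)$. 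The uniform average then follows immediately.

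The key step is a no-signalling argument. Since the weights $p_k^\tau$ are the outcome probabilities, linearity gives
\begin{equation*}
\sum_\tau p_k^\tau(\varphi)\,\rho_{C_1}^\tau(k,M)=\mathrm{Tr}_{X_kA^\prime}\Big[\sum_\tau (\Pi^\tau_{X_kA_{k-1}}\otimes \Id)\,\big(\ketbra{\varphi}{\varphi}_{X_k}\otimes\ketbra{\psi_{k-1}}{\psi_{k-1}}_{A^\prime C}\big)\Big]\Big|_{C_1}.
\end{equation*}
Here $\Pi^\tau$ are the Bell projectors. This identity relies on the fact that, per step (iii) of the protocol, no outcome-dependent unitary is applied to $C$ after the $k$th BSM, so each $\rho^\tau_{C_1}$ is just the conditional marginal. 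Completeness, $\sum_\tau\Pi^\tau=\Id$, together with the cyclicity of the partial trace over $X_kA_{k-1}$, then shows that the left-hand side equals the reduced state of $C_1$ in $\ket{\psi_{k-1}}_{A^\prime C}$. In other words, it is the receiver's state after $k-1$ successful BSMs. A measurement performed on systems disjoint from $C_1$ cannot change $C_1$'s marginal.

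The remaining step is to invoke Theorem~\ref{theorem: N to M fidelity}. Its proof shows that after $k-1$ successful iterations the fidelity of $C_1$ equals $\gamma(k-1,M)$ universally. For $k=2$ this is the $1\to M$ value $\gamma(1,M)$ from Sec.~\ref{sec: 1 to M telecloning protocol}. Hence the outcome-weighted fidelity equals $\gamma(k-1,M)$ for every $\ket{\varphi}$, and integrating against $\mu$ gives the theorem. By the permutation symmetry of the state noted earlier, the same holds for every $C_i$.

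The main obstacle is the definitional one: confirming that the paper's average fidelity weights each outcome by its probability, rather than averaging the postselected fidelities uniformly over outcomes or over $\varphi$ alone. If the intended definition is $\int \mathrm{d}\mu\,\sum_\tau p^\tau_k\mathcal{F}^\tau$, the argument above is complete.

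As a sanity check, I would verify the $\Phi^\pm$ contribution explicitly. I would integrate $p_k^{\Phi^-}(\theta)\mathcal{F}^{\Phi^-}_{C_1}(k,M,\theta)$ over $\sin\theta\,\mathrm{d}\theta\,\mathrm{d}\phi/4\pi$. Its denominator cancels against that of $p_k^{\Phi^-}$, leaving an elementary integral of a linear function of $\cos2\theta$. Adding $p_k\gamma(k,M)$ and the analogous $\Psi^\pm$ terms from Appendix~\ref{app: psi outcomes} should reproduce $\gamma(k-1,M)$.
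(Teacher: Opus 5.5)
Your argument is correct, but it reaches the result by a different route than the paper.

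The paper proceeds computationally. It takes the explicit state-dependent fidelity functions (Eq.~\eqref{eq: state dependent phi minus fidelity} and the $\Psi^\pm$ analogues in Appendix~\ref{app: psi outcomes}) and weights each by its outcome probability. It then integrates over the Bloch sphere to obtain $\Tilde{\mathcal{F}}_{C_1}^{\Phi^-}=\Tilde{\mathcal{F}}_{C_1}^{\Psi^\pm}=\frac{k[(k+3)M+k+2]-4}{6(k+1)(k+2)M}$, adds $p_k\gamma(k,M)$, and simplifies.

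You instead observe that the paper's conditional states (Eqs.~\eqref{eq: Phi pm states AC} and \eqref{eq: Psi pm states AC}) are bare Bell projections with no correction applied on $C$. Completeness of the Bell basis then forces $\sum_\tau p_k^\tau\rho_{C_1}^\tau$ to equal $C_1$'s marginal in $\ket{\psi_{k-1}}_{A^\prime C}$. That marginal has fidelity $\gamma(k-1,M)$ by Theorem~\ref{theorem: N to M fidelity}. Your definitional worry resolves in your favour: the paper's $\Tilde{\mathcal{F}}$ is exactly the Bloch-sphere average of $\sum_\tau p_k^\tau\mathcal{F}^\tau$.

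Your route has two advantages. It is brief, with no cumbersome partial traces. It also gives a strictly stronger, pointwise statement, valid for every input state and hence for every prior. In particular, it explains rather than recomputes the paper's later finding $\Tilde{\mathcal{F}}_{\mathrm{Pauli}}(k,M)=\gamma(k-1,M)$. You can confirm this against the paper's formulas: $p_k^{\Phi^-}\mathcal{F}_{C_1}^{\Phi^-}+p_k^{\Psi^+}\mathcal{F}_{C_1}^{\Psi^+}+p_k^{\Psi^-}\mathcal{F}_{C_1}^{\Psi^-}$ depends on the input only through $\cos 2\theta+\xi_x+\xi_y=-1$.

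What the paper's computation gives that yours does not is the outcome-resolved breakdown, for example that the three unsuccessful outcomes contribute equally on average.
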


\textbf{Proof.} Since the states of the receivers are invariant under permutation change, we consider the fidelity of $C_1$ without loss of generality. To find the average teleportation fidelity $\Tilde{\mathcal{F}}_{C_1}(k,M)$ for uniformly random input states $X$, we weight each of the fidelity functions by their respective probability functions and then average over the surface of the Bloch sphere. This gives us the average probability-weighted fidelity functions
\begin{align}
    \Tilde{\mathcal{F}}_{C_1}^{\Phi^-}(k,M) = \Tilde{\mathcal{F}}_{C_1}^{\Psi^\pm}(k,M)= \frac{k [(k+3) M+k+2]-4}{6 (k+1) (k+2) M}\,,
\end{align}
for the specific outcomes $\ket{\Phi^-}_{X_kA_{k-1}}$ and $\ket{\Psi^\pm}_{X_kA_{k-1}}$, respectively. We then have
\begin{equation}
\begin{split}
    \Tilde{\mathcal{F}}(k&,M) = p_k\gamma(k,M) + \Tilde{\mathcal{F}}_{C_1}^{\Phi^-}(k,M) + 2\Tilde{\mathcal{F}}_{C_1}^{\Psi^\pm}(k,M)\\
    &= \frac{k M+k-1}{(k+1) M} = \gamma(k-1,M)\,.
\end{split}
\end{equation}
As long as the previous $k-1$ steps have been successful, the $k$th BSM perfectly preserves the average fidelity. $\square$

An implication of Theorem~\ref{theorem: average fidelity} is that it helps to answer the question that Alice and the receivers may ask: \textit{if we have had $k<N$ successful BSMs thus far, should we stop and accept the fidelity of $\gamma(k,M)$ as `good enough', or continue and potentially lower our fidelity?} Since the average fidelity of the next BSM is equal to the current fidelity, there is a low risk factor associated with pressing on for higher fidelities, especially since the probability of universal optimal fidelity improves with $k$ (as per Theorem~\ref{theorem: probability}).

\subsection{Quantifying information gain}\label{sec: quantifying info gain}
Given that the probability of observing each unsuccessful outcome is dependent on the unknown state $\ket{\varphi}_X$, observing one of said outcomes will provide Alice with some information about the nature of $\ket{\varphi}_X$ (which she may then classically communicate to each receiver). That is to say, an unsuccessful outcome may suggest to Alice to which eigenbasis the unknown state is probably closest. For example, an outcome of $\ket{\Phi^-}_{X_kA_{k-1}}$ suggests that $\ket{\varphi}_X$ is close to the computational basis. Furthermore in this example, since the probabilities associated with the computational basis states increase with $k$ while the probability associated with the equatorial states simultaneously decreases, one can surmise that the level of information gain also increases with $k$.

To demonstrate this analytically, we choose the Kullback--Leibler (KL) divergence (also called the \textit{relative entropy}) as a suitable quantifier of information gain. For the $k$th outcome $\Pi_k$, this is given by
\begin{equation}
    D_{\mathrm{KL}}^{\Pi} (k)  = \int_{0}^{2\pi}\int_0^\pi p(\theta,\phi\vert \Pi_k)\log_2\!\frac{p(\theta,\phi\vert \Pi_k)}{p(\theta)} \mathrm{d}\theta \mathrm{d}\phi\,.
\end{equation}
Here, $p(\theta)$ is the probability density associated with the unknown state being defined by a given pair of angles $\theta,\phi$. Given that we are dealing with points on the surface of a unit sphere, this probability is independent of $\phi$ since the azimuth does not determine the density of states at a given surface element, and is given by
\begin{equation}
    p(\theta) = \frac{1}{4\pi}\sin\!\theta\,.
\end{equation}
Meanwhile, $p(\theta,\phi\vert \Pi_k)$ is the conditional probability of the input state being defined by $\theta,\phi$, given that we have observed the outcome $\Pi_k$. Currently, we only know the converse probabilities $p(\Pi_k\vert \theta,\phi)\in\{p_k,p_k^{\Phi^-},p_k^{\Psi^+},p_k^{\Psi^-}\}$, but we may employ Bayes' Theorem as
\begin{equation}
    p(\theta,\phi\vert\Pi_k) = \frac{p(\Pi_k\vert \theta,\phi) p(\theta)}{p(\Pi_k)}\,,
\end{equation}
where
\begin{equation}
    p(\Pi_k) = \int_{0}^{2\pi}\int_{0}^{\pi}p(\Pi_k\vert \theta,\phi)p(\theta)\mathrm{d}\theta \mathrm{d}\phi
\end{equation}
giving
\begin{equation}
    p(\Phi^-)=p(\Psi^+)= p(\Psi^-) = 
    \frac{k+2}{6(k+1)}\,.
\end{equation}

We start by showing that obtaining an outcome $\ket{\Phi^-}_{X_kA_{k-1}}$ alongside $k-1$ successful BSMs leads to an information gain of
\begin{equation}\label{eq: KL divergence uniform}
\begin{split}
    D_{\mathrm{KL}}^{\Phi^-}(k) = \frac{2(k+2)^{-1}}{3\ln\!2}\left[\frac{6\tan^{-1}\!\sqrt{k-1}}{\sqrt{k-1}} - (k+5)\right]+\mathcal{E}_k\,,
\end{split}
\end{equation}
bits, where $\mathcal{E}_k=\log_2\!\frac{3k}{k+2}$. It can also be shown that $D_{\mathrm{KL}}^{\Phi^-}(k)= D_{\mathrm{KL}}^{\Psi^+}(k)= D_{\mathrm{KL}}^{\Psi^-}(k)$. The derivative of this function is positive for all $k\geq 2$, and so the more successful BSMs made in conjunction with an unsuccessful one leads to a larger information gain. That is to say, as $k$ increases, Alice and the receivers can be more confident about the eigenbasis to which $\ket{\varphi}_X$ is probably closest. As expected, we find that as $k\to1$ we have $D_{\mathrm{KL}}^{\Pi}\to 0$ since each outcome has a $\theta$-independent probability of $\frac{1}{4}$ for $k=1$. Meanwhile, $k\to \infty$ gives $D_{\mathrm{KL}}^{\Pi}\to (\ln\!27 -2)/\ln\!8\approx 0.623166$ bits of information.

\subsection{Case study: Pauli eigenstates}\label{sec: Pauli case study}
\begin{figure}
    \centering
    \includegraphics[width=\linewidth]{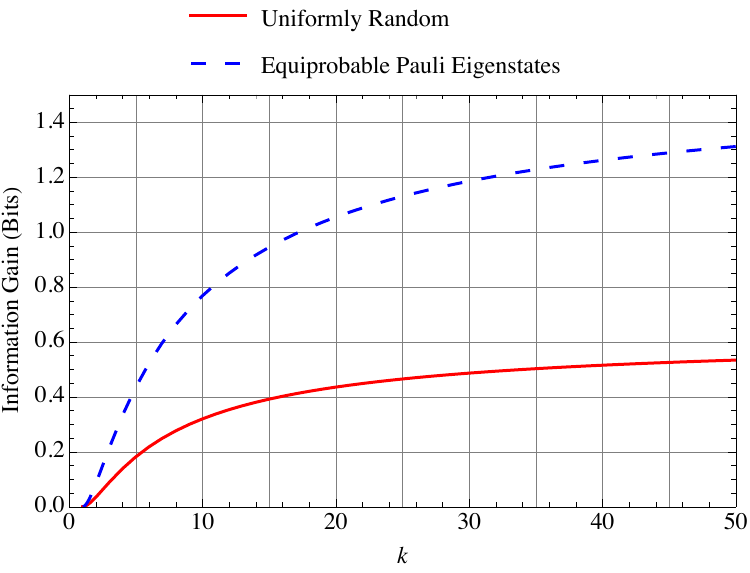}
    \caption{The KL divergence (in bits) plotted against $k$. This is shown for uniformly random input states [$D_{\mathrm{KL}}^{\Pi}(k)$] and equal probability of each of the six Pauli eigenstates [$D_{\mathrm{KL,d}}^{\Pi}(k)$].}
    \label{fig: info gain}
\end{figure}
We have already seen that the maximum (and minimum) values of each of the probability functions $p_k^{\Phi^-},p_k^{\Psi^+},p_k^{\Psi^-}$ are all situated at the points on the Bloch sphere that correspond to the Pauli eigenstates $\ket{0},\ket{1},\ket{+},\ket{-},\ket{+\ui},\ket{-\ui}$, where
\begin{align}
    \ket{\pm} = \frac{1}{\sqrt{2}}\left( \ket{0} \pm \ket{1} \right)\ \ ,\ \ 
    \ket{\pm\ui} = \frac{1}{\sqrt{2}}\left( \ket{0} \pm \ui\ket{1} \right)\,.
\end{align}
Due to this preferential behavior of the unsuccessful outcomes towards these states, one may instead consider their performances with a discrete probability distribution describing the equiprobable input of these six states.

Let us first find the function for information gain. The total probability of each unsuccessful outcome in this scenario, after averaging the probability functions over the six input states, is coincidentally again
\begin{equation}
    p(\Phi^-)=p(\Psi^+)=p(\Psi^-) = \frac{k+2}{6(k+1)}\,.
\end{equation}
The discrete KL divergence $D_{\mathrm{KL,d}}^\Pi$ for outcome $\Pi$ is then
\begin{equation}
    D_{\mathrm{KL,d}}^\Pi(k) = \sum_s p(s \vert \Pi) \log_2\! \frac{p(s \vert \Pi)}{p(s)}\,,
\end{equation}
for input states $s\in\{\ket{0},\ket{1},\ket{+},\ket{-},\ket{+\ui},\ket{-\ui}\}$. The KL divergence is equivalent for each outcome, with
\begin{equation}
    D_{\mathrm{KL,d}}^\Pi(k) = \frac{1}{k+2}\left(2\log_2\!\frac{3}{k+2} + k \log_2\!\frac{3k}{k+2}\right)
\end{equation}
bits.
This function is larger than $D_{\mathrm{KL}}^{\Phi^-}(k)$ for all $k\geq2$, thus confirming more potential for Alice and the receivers to infer the unknown state $\ket{\varphi}_X$. We display both functions in Fig.~\ref{fig: info gain}. As $k\to\infty$, we have $D_{\mathrm{KL,d}}^{\Pi}(k)\to \log_2\!3\approx 1.58496$ bits, significantly more than uniformly random inputs.

If we now average the state-dependent fidelity functions --- weighted by the state-dependent probability functions --- equally over the six input states, we arrive at an average teleportation fidelity of
\begin{equation}
    \Tilde{\mathcal{F}}_{\mathrm{Pauli}}(k,M) = \gamma(k-1,M)\,.
\end{equation}
Thus, although the average fidelity is the same as for uniform randomness (as in Theorem~\ref{theorem: average fidelity}), the discrete probability distribution means that Alice and the receivers stand to gain far more information about the input state.
 
\section{Preconditions for quantum advantage}\label{sec: preconditions}
To supplement the protocol introduced in this manuscript, we explore the interplay between entanglement and the performance of both the symmetric $1\to M$ and $N\to M$ protocols. It has long been established that distillable entanglement~\cite{Horodecki_1998} is necessary for outperforming teleportation protocols of $d$-dimensional systems involving only classical communication~\cite{Horodecki_1999}, and so naturally we now ask similar questions about the quantum telecloning protocols. Since we are dealing with arbitrarily large multipartite states here rather than the bipartite states of standard teleportation, we must consider differing classes of multipartite entanglement and the role played by each in executing the task of telecloning. We first review the delicate differences between categorizations of multipartite entanglement, before probing their necessity (or lack thereof) for symmetric telecloning through appropriate noise models.
  \subsection{Classes of multipartite entanglement: A brief review}
The topic of entanglement in multipartite systems contains many nuanced categories of such entanglement, a few of which we outline here~\cite{Hawkins_thesis}. To fully witness these subtleties, let us start with an $n$-partite pure state $\ket{\psi}_{1,\cdots,n}$. We call this state \textit{fully separable} if it can be written as a product state of all subsystems, i.e.,
\begin{equation}\label{eq: fully separable pure state}
    \ket{\psi}_{1,\cdots, n} = \bigotimes_{i=1}^n \ket{\chi}_i \iff\ \text{fully separable}\,.
\end{equation}
If the pure state cannot be written in this way, then it must contain at least some entanglement. If the state is not fully separable, but rather can only be written as a splitting of $m$ parts $\{P_j\}_{j=1}^m$ in tandem, then it is called \textit{$m$-separable} \cite{Guhne_2009,Sabin_2008}, with a Hilbert space satisfying $\mathcal{H}_1\otimes\cdots\otimes\mathcal{H}_n=\mathcal{H}_{P_1}\otimes\cdots\otimes\mathcal{H}_{P_m}$. That is to say,
\begin{equation}\label{eq: m-separable states}
    \ket{\psi}_{1,\cdots,n} = \bigotimes_{j=1}^m \ket{\chi}_{P_j}\iff\ \text{$m$-separable}\,.
\end{equation}
In the case of $m=2$, the state is said to be \textit{biseparable}. Next, a pure state possesses \textit{genuine multipartite entanglement} (GME) if, and only if, it is neither fully separable nor $m$-separable for any $m\in\{l\}_{l=2}^{n-1}$. We do, however, note that if a state can be written as $m$-separable, then it can also be written as biseparable, and so the definition of a genuinely multipartite entangled state (sometimes called a \textit{fully multipartite entangled state}) is often said to simply be those which are not biseparable \cite{Guhne_2005}. Finally, we have the concept of \textit{global entanglement}. There is no unanimous agreement on the nomenclature in the literature, with some authors using the term `global entanglement' to refer to what we define as GME. In this manuscript, we choose to define a globally entangled state as one that is inseparable with respect to all possible bipartitions of the subsystems, i.e., every bipartition is entangled. Therefore, not only do all globally entangled pure states possess GME, but the conditions for global entanglement and GME are equivalent for pure states.

Let us now see how the above definitions hold up when moving to an $n$-partite mixed state $\rho_{1,\cdots,n}$. The definition of a fully separable multipartite state easily generalizes from Eq.~\eqref{eq: fully separable pure state} as a state which can be written as a convex sum of fully separable $n$-partite pure states $\{\ket{\chi_k^{\mathrm{FS}}}\}$, meaning we can say that
\begin{equation}\label{eq: fully separable mixed state}
    \rho_{1,\cdots,n} = \sum_k r_k \ketbra{\chi_k^{\mathrm{FS}}}{\chi_k^{\mathrm{FS}}} \iff\ \text{fully separable}\,,
\end{equation}
where each state in $\{\ket{\chi_k^{\mathrm{FS}}}\}$ satisfies the condition in Eq.~\eqref{eq: fully separable pure state}. Meanwhile, an $n$-partite, $m$-separable mixed state is one that can be written as a convex combination of pure $m$-separable states (i.e., those which satisfy Eq.~\eqref{eq: m-separable states}) \cite{Guhne_2009}, although we note that each pure state in the mixture may be $m$-separable with respect to different partitions~\cite{Sabin_2008}. We define a mixed state with GME in a similar way to the pure-state case: a state which is neither fully separable nor biseparable, and so cannot be written as a convex combination of biseparable pure states. Since there exist biseparable states which --- although written as a convex sum of biseparable states --- are still inseparable in every possible bipartition, the definitions of GME and global entanglement diverge in the case of mixed states.

To quantify bipartite entanglement, we choose the positive partial transpose (PPT)~\cite{Peres,Horodecki1} \textit{negativity} due to its sufficiency as a measure (and necessity in the case of qubit-qubit and qubit-qutrit systems \cite{Horodecki_arxiv,Horodecki_1998}). The negativity $\mathcal{N}_{Q:R}$ in the $Q:R$ bipartition of the arbitrary density matrix $\rho_{QR}$ (comprising systems $Q$ and $R$) is defined as the absolute value of the sum of the negative eigenvalues of the partial transpose of $\rho_{QR}$~\cite{Vidal_2002}. This can, for instance, be computed via
\begin{equation}\label{eq: negativity definition}
    \mathcal{N}_{Q:R}(\rho_{QR}) = \sum_i \frac{\abs{\lambda_{QR}^i}-\lambda_{QR}^i}{2}\,,
\end{equation}
where $\{\lambda_{QR}^i\}$ are the eigenvalues of the partially transposed matrix $\rho_{QR}^{T_Q}$ (or, alternatively, $\rho_{QR}^{T_R}$).

In an effort to gauge the global entanglement, we use the \textit{$n$-concurrence}. The 2-concurrence was originally introduced as an entanglement measure for a two-qubit state \cite{Wootters_1998}, but was later generalized to an $n$-qubit system where $n$ is even \cite{Wong_2001,Brennen_2004}. Given that we will be considering the resource state $\ket{\psi_{\mathrm{T}}}$ containing $2M$ qubits, this is adequate for our needs. The $n$-concurrence of a state $\rho$ is then computed via
\begin{equation}\label{eq: N-concurrence definition}
    \mathcal{C}_n(\rho) = \max\left[0, \sqrt{\eta_1} - \sum_{j=2}^{2^n}\sqrt{\eta_j}\right]\,,
\end{equation}
where the matrix
\begin{equation}
    \Bar{\rho} = \rho \bigg(\bigotimes_{j=1}^n \sigma_y^j\bigg) \rho^* \bigg(\bigotimes_{j=1}^n \sigma_y^j\bigg)
\end{equation}
has the set of eigenvalues $\{\eta_j\}$ of $\Bar{\rho}$ with $\eta_1\geq\eta_j\forall j$~\cite{Campbell_2010}. Non-zero $\mathcal{C}_n$ is a sufficient indicator of entanglement, with this quantifier also guaranteeing a null value for states which have one qubit separable from all of the rest~\cite{Brennen_2004}. An example of a class of states whose intrinsic entanglement is not detected by this measure is the $n$-partite $W$ states defined in Ref.~\cite{Dur2000}.

\subsection{Resources for symmetric $1\to M$ telecloning}\label{sec: 1 to M resources}
Before exploring the entanglement in the resource state, we first find a suitable threshold against which we benchmark the performance of the protocol. To explore any advantage gained by exploiting the phenomena of quantum theory, it is then natural to find the limit of fidelity when Alice and the receivers are limited to classical communication only. This upper bound, which we denote as $\mathcal{F}_{\mathrm{cl}}$, can be realized by utilizing an \textit{optimal quantum state estimation}~\cite{Massar_1995,Bruss_1999} technique. An exemplary method for doing this is where Alice uses a combination of optimal UQC machines and universal measurements to estimate the state $\ket{\varphi}_X$, before then classically communicating her results to each receiver such that they may recreate the state in their own lab. If Alice possesses $N$ copies of an unknown qubit, the classical-communication fidelity is upper-bounded by
\begin{equation}\label{eq: classical fidelity bound}
    \mathcal{F}_{\mathrm{cl}}(N) = \frac{N+1}{N+2}\,.
\end{equation}

While previous work has gauged the telecloning fidelity using the global entanglement~\cite{Gordon_2007}, we show that pairwise entanglement is the more pertinent entanglement class to the symmetric $1\to M$ protocol.
\begin{theorem}
    Neither global multipartite entanglement nor GME is necessary for quantum advantage (or even optimal performance) of the $1\to M$ quantum telecloning protocol for any number of receivers. That is to say, a telecloning fidelity of $\gamma(1,M)$ does not rely on the precondition of these classes of multipartite entanglement, despite the resource state naturally possessing these properties.
\end{theorem}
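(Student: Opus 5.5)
The plan is to exhibit, for every $M$, a mixed resource state that still gives each receiver fidelity $\gamma(1,M)$ yet is biseparable with respect to a fixed bipartition. Such a state has neither global entanglement nor GME.

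The natural candidate is a dephased version of $\ket{\psi_{\mathrm{T}}}$. Write $\ket{\psi_{\mathrm{T}}}=\frac{1}{\sqrt2}(\ket{0}_P\ket{\omega_0}_{AC}+\ket{1}_P\ket{\omega_1}_{AC})$. Alice's auxiliaries $A$ are traced out at the end of the protocol, so any noise acting only on $A$ that commutes with the relevant structure should leave the receivers' reduced states untouched. Concretely, I will apply a completely dephasing channel to $A$ in the Dicke basis,
\begin{equation}
\Delta_A(\rho)=\sum_{q=0}^{M-1}\big|D_q^{M-1}\big\rangle\!\big\langle D_q^{M-1}\big|_A\,\rho\,\big|D_q^{M-1}\big\rangle\!\big\langle D_q^{M-1}\big|_A ,
\end{equation}
or, more simply, measure $A$ in that basis. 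In $\ket{\omega_0}$ and $\ket{\omega_1}$, each Dicke label $q$ on $A$ appears exactly once, paired with $\ket{D_q^M}_C$ and $\ket{D_{q+1}^M}_C$ respectively (reindexing $\omega_1$ by $q=M-1-j$, with matching amplitude $\sqrt{2(q+1)/(M(M+1))}$). The dephased state is therefore a mixture over $q$ of pure states $\ket{\chi_q}_{PC}\otimes\ket{D_q^{M-1}}_A$, with $\ket{\chi_q}\propto \sqrt{M-q}\ket{0}_P\ket{D_q^M}_C+\sqrt{q+1}\ket{1}_P\ket{D_{q+1}^M}_C$. Every term is product across $A:PC$, so the state is biseparable and is not globally entangled.

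Next I would show that the protocol's output on $C_1$ is unchanged. After Alice's BSM on $XP$ and the Pauli corrections of Table~\ref{tab: telecloning}, the post-measurement (unnormalized) state of $AC$ is $(\Id_A\otimes\sigma^{\otimes M}_C)$ applied to the branch in Eq.~\eqref{eq: XPAC pure state}. The receivers' reduced state $\rho_C$ is obtained by tracing $A$. Dephasing $A$ in any orthonormal basis before or after the $XP$ measurement commutes with that measurement and with unitaries on $C$, and leaves $\mathrm{Tr}_A$ invariant. Hence each receiver's state and its fidelity $\gamma(1,M)$ are identical to the pure-resource case, and quantum advantage over $\mathcal{F}_{\mathrm{cl}}(1)=2/3$ is retained. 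One subtlety must be checked here: the Table~\ref{tab: telecloning} correction need not act on $A$ for the $1\to M$ task, so nothing is lost even though the $A$–$C$ coherence is destroyed.

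For $M=1$, $A$ is empty, so the statement concerns only $M\geq 2$, where $A$ is nonempty and the bipartition $A:PC$ is nontrivial. The main point to get right is the definitional one: the mixed-state GME and global-entanglement notions in the review above must be satisfied by this construction. Biseparability in one fixed cut suffices to kill both, since GME needs non-biseparability and global entanglement needs every cut entangled. Finally, I would remark that the original $\ket{\psi_{\mathrm{T}}}$ is pure and entangled in every cut, which gives the "despite naturally possessing" clause. If one prefers a noise model, dephasing with strength $\lambda$ interpolates between the two states while keeping fidelity fixed, which illustrates the independence explicitly.
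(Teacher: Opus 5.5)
Your proposal is correct and follows essentially the same route as the paper. Both arguments rest on the fact that the $1\to M$ protocol never acts on $A$, so noise on $A$ leaves $\rho_{PC}$, and hence the fidelity $\gamma(1,M)$, unchanged while destroying entanglement across a cut involving $A$; the paper applies strong local noise to the single qubit $A_1$ after computing $\rho_{PC_1}$ explicitly, whereas you dephase all of $A$ in the Dicke basis and write the resulting $A:PC$ separable decomposition out explicitly. One small repair: for $M\geq 3$, $\sum_q |D_q^{M-1}\rangle\langle D_q^{M-1}|$ is the projector onto the symmetric subspace rather than the identity, so $\Delta_A$ needs the complementary projector added to be trace-preserving, which is harmless because $|\psi_{\mathrm{T}}\rangle$ is supported in that subspace.
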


\textbf{Proof.} It can be shown that the reduced state of any port-receiver system (we again choose $C_1$ without loss of generality) is
\begin{equation}
    \rho_{PC_1}^M = \frac{1}{6M}\left[(2M+4)\ketbra{\Phi^+}{\Phi^+} + (M-1)\Id\right]\,.
\end{equation}
Performing the standard teleportation protocol using the above state yields a universal teleportation fidelity of
\begin{equation}
    \mathcal{F}_{C_1}^1(M)=\bra{\varphi}_X \rho^{M\prime}_{C_1} \ket{\varphi}_X = \frac{2M+1}{3M} =\gamma(1,M)\,,
\end{equation}
where
\begin{equation}
    \rho_{C_1}^{M\prime}=4\Tr_{XP}\left\{\Pi_{XP}^{\Phi^+}\left[\ketbra{\varphi}{\varphi}_X\otimes \rho_{PC_1}^M\right]\Pi_{XP}^{\Phi^+}\right\}
\end{equation}
with $\Pi_{XP}^{\Phi^+}=\ketbra{\Phi^+}{\Phi^+}_{XP}$. Thus, despite there being pairwise entanglement between each auxiliary qubit in $A$ and each receiver in $C$, the entire auxiliary system can be fully traced out with no effect on the optimality of the protocol. This means that the auxiliary system may be subject to large amounts of noise, and so no resources are required to maintain these two-qubit entanglement links. If we consider even just one of the auxiliary qubits, say $A_1$, to be subject to a very strong local noise channel such that the entanglement in $A_1:PA_2\cdots A_{M-1} C$ dies, this means that optimal teleportation fidelity may still be attained for each receiver despite the resource state no longer possessing global entanglement or GME. $\square$

We have shown in the above proof that the $1\to M$ protocol can be considered a collection of two-qubit teleportation protocols between $P$ and each receiver. Combining this with the known result that distillable entanglement is necessary in such situations~\cite{Horodecki_1999} (and with the fact that all qubit-qubit entangled states are distillable~\cite{Horodecki_distill_1997}), the following corollary is implied.
\begin{corollary}\label{corollary: bipartite}
    Non-zero bipartite entanglement in $P:C_i$ is necessary for quantum advantage when telecloning a single copy of an unknown qubit to the receiver $C_i$ for all $M\geq2$.
\end{corollary}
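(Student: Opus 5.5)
The plan is to reduce the corollary to the known two-qubit result by using the structure exposed in the preceding proof. That proof showed that, in the $1\to M$ protocol, the fidelity of receiver $C_i$ depends only on the reduced state $\rho_{PC_i}$. This holds because the $1\to M$ protocol consists of a BSM on $XP$ followed by a local Pauli correction on $C_i$ alone. Tracing out $A$ and every $C_j$ with $j\neq i$ therefore commutes with the protocol. The first step is to state this formally: for \emph{any} $2M$-qubit resource state $\rho$ used in the protocol of Sec.~\ref{sec: 1 to M telecloning protocol}, the output on $C_i$ equals the output of standard teleportation through the two-qubit channel state $\rho_{PC_i}=\Tr_{\overline{PC_i}}\rho$. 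The averaged fidelity of $C_i$ is then exactly the teleportation fidelity of $\rho_{PC_i}$.

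The second step is the contrapositive. Suppose $\mathcal{N}_{P:C_i}(\rho_{PC_i})=0$. For two qubits, PPT is equivalent to separability by the Peres--Horodecki criterion, so $\rho_{PC_i}$ is separable. A separable state can be prepared by LOCC, specifically by shared randomness together with local preparations. Any protocol that uses it, consisting of a BSM, classical communication and a local unitary, can therefore be simulated with classical communication only. Its fidelity is thus bounded by $\mathcal{F}_{\mathrm{cl}}(1)=2/3$ from Eq.~\eqref{eq: classical fidelity bound}, so there is no quantum advantage. Equivalently, one can cite Ref.~\cite{Horodecki_1999}: a teleportation fidelity above $2/3$ requires $\rho_{PC_i}$ to be distillable, which for qubits means entangled~\cite{Horodecki_distill_1997}, and hence $\mathcal{N}_{P:C_i}>0$. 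If desired, the bound can also be made explicit through the Horodecki relation $\mathcal{F}=(2F_{\max}+1)/3$, where $F_{\max}$ is the fully entangled fraction. A separable two-qubit state has $F_{\max}\le 1/2$, which gives $\mathcal{F}\le 2/3$.

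The step needing the most care is the first one, the reduction for an \emph{arbitrary} noisy resource state, since Theorem~3 only computed $\rho_{PC_1}^M$ for the ideal $\ket{\psi_{\mathrm{T}}}$. I would check it directly. Writing the post-measurement unnormalized state on $AC$, applying the Pauli correction $U^{XP}$ on $C_i$, and summing over the four outcomes, the partial trace over $A C_{\neq i}$ can be moved inside the sum because both the projector on $XP$ and the correction act trivially on those systems. The result is the standard teleportation map applied to $\rho_{PC_i}$. The argument also allows Alice to apply corrections on $A$ or on the other receivers, since these act only on traced-out systems. The condition $M\ge2$ plays no role beyond making the scenario a genuine $1\to M$ telecloning.
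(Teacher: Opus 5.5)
Your proposal is correct and follows the paper's route. The paper likewise treats the $1\to M$ protocol as a collection of two-qubit teleportations through $\rho_{PC_i}$, then invokes the necessity of distillable entanglement~\cite{Horodecki_1999} and the distillability of all entangled two-qubit states~\cite{Horodecki_distill_1997}. Your explicit check that the reduction holds for an arbitrary noisy resource state, and your direct argument that a separable $\rho_{PC_i}$ makes the protocol LOCC-simulable and hence bounded by $\mathcal{F}_{\mathrm{cl}}(1)=2/3$, fill in steps the paper leaves implicit.
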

We show an updated entanglement diagram indicating only the `necessary' pairwise entanglement links in Fig.~\ref{fig: necessary entanglement structure}.
\begin{figure}[t]
    \centering
    \begin{tikzpicture}[node distance={28mm}, very thick, decoration={
    markings,
    mark=at position 0.5 with {\arrow{>}}}, main/.style = {draw, circle}] 
\node[main] (1) {\,$P$\,};
\node[main] (2) [right of=1] {$C_1$};
\node[main] (3) [below of=1] {$A_1$};
\node[main] (4) [below of=2] {$C_2$};
\node[main] (5) [below of=3] {$A_2$};
\node[main] (6) [below of=4] {$C_3$};
\draw[-, dashed] (3) -- (2);
\draw[-, dashed] (3) -- (4);
\draw[-, dashed] (3) -- (6);
\draw[-, dashed] (5) -- (2);
\draw[-, dashed] (5) -- (4);
\draw[-, dashed] (5) -- (6);
\draw[red] (1) -- (2);
\draw[red] (1) -- (4);
\draw[red] (1) -- (6);
\end{tikzpicture}
\ \ \ \ \ \ \ \ \ \ \ \ \ \ \ \ \ 
\begin{tikzpicture}[node distance={28mm}, very thick, decoration={
    markings,
    mark=at position 0.5 with {\arrow{>}}}, main/.style = {draw, circle}] 
\node[main] (1) {$C_1$};
\node[main] (2) [right of=1] {$A_1$};
\node[main] (3) [below of=1] {$C_2$};
\node[main] (4) [below of=2] {\,$P$\,};
\node[main] (5) [below of=3] {$C_3$};
\node[main] (6) [below of=4] {$A_2$};
\draw[-, dashed] (3) -- (2);
\draw[-, dashed] (1) -- (2);
\draw[-, dashed] (3) -- (6);
\draw[-, dashed] (5) -- (2);
\draw[-, dashed] (1) -- (6);
\draw[-, dashed] (5) -- (6);
\draw[red] (4) -- (1);
\draw[red] (4) -- (3);
\draw[red] (4) -- (5);
\end{tikzpicture}
    \caption{A re-working of Fig.~\ref{fig: entanglement structure} to show the necessary entanglement structure in the $\rho_{PAC}$ state for the $1\to3$ protocol. The presence of necessary pairwise entanglement is shown by solid red lines between nodes, while unnecessary pairwise entanglement is indicated by a dashed black line. The symmetric nature of the resource is again exemplified by the freedom of choice of $P$.}
    \label{fig: necessary entanglement structure}
\end{figure}

Since we have now shown that global entanglement is not necessary, one might now inquire about its sufficiency. That is to ask, \textit{if we are assured that there is non-zero global entanglement in the resource state, are we guaranteed quantum advantage for all receivers?} One way we can search for a counter-example is by, e.g., subjecting only the qubits $P$ and $C_1$ to a local noise channel while leaving all other qubits unspoiled. This is such that the negativity in $P:C_1$ may vanish (thus losing quantum advantage for the first receiver as per Corollary~\ref{corollary: bipartite}), yet the global entanglement may still be non-zero due to the rest of the remaining entanglement structure. To this end, we choose local dephasing channels acting on $P$ and $C_1$ defined by the Kraus operators
\begin{equation}
        \mathcal{K}_{0}(\mu){=}\sqrt{1-\mu}\Id\ ,\ \mathcal{K}_{1}(\mu){=}\sqrt{\mu}\ketbra{0}{0}\ ,\ \mathcal{K}_{2}(\mu){=}\sqrt{\mu}\ketbra{1}{1}\,.
\end{equation}
For simplicity, we assume the noise strength $\mu$ to be the same for both qubits, such that the state transforms as
\begin{equation}
    \rho_{PC_1}^{M\prime}(\mu) = \sum_{q,r} [\mathcal{K}_q(\mu)\otimes \mathcal{K}_r(\mu)] \rho_{PC_1}^M [\mathcal{K}_q(\mu)\otimes \mathcal{K}_r(\mu)]^\dagger\,.
\end{equation}
The $M$-dependent level of noise strength $\Tilde{\mu}_{PC_1}(M)$ which induces entanglement sudden death~\cite{Yu,YuReview} is
\begin{equation}
    \Tilde{\mu}_{PC_1}(M) = 1 - \sqrt{\frac{M-1}{M+2}}\,,
\end{equation}
with $\lim_{M\to\infty}\Tilde{\mu}_{PC_1}(M)=0$.

Meanwhile, the $2M$-concurrence of the telecloning resource state with the local dephasing channels acting on $P$ and $C_1$ is
\begin{equation}
    \mathcal{C}_{2M}(M,\mu) = \max\left\{0,\sqrt{(1-\mu)^2+\kappa_M^2\epsilon_\mu^2} - \epsilon_\mu\right\}\,,
\end{equation}
where
\begin{equation}
    \kappa_M=\frac{M+2}{3M}\ \ ,\ \ \epsilon_\mu=\frac{\mu(2-\mu)}{2}\,.
\end{equation}
This function reaches a value of zero at
\begin{equation}
    \Tilde{\mu}_{PAC}(M) = 1 - \frac{\sqrt{17M^2-4M-4}-3M}{2\sqrt{(M-1)(2M+1)}}\,,
\end{equation}
with 
\begin{equation}
    \lim_{M\to\infty}\Tilde{\mu}_{PAC}(M)=\frac{1}{4} \left(4+3 \sqrt{2}-\sqrt{34}\right)\approx0.602922\,.
\end{equation}
We find that $\Tilde{\mu}_{PAC}(M)>\Tilde{\mu}_{PC_1}(M)\ \forall\, M\geq2$, implying that there is always an interval $\mu\in(\Tilde{\mu}_{PC_1},\Tilde{\mu}_{PAC})$ where the $2M$-concurrence is non-zero despite the $P:C_1$ entanglement being zero. Although this is promising evidence, it unfortunately cannot confirm for certain that global entanglement is insufficient due to the limitations of the $n$-concurrence. As mentioned, this is guaranteed to be zero if a single qubit (or indeed, an odd number of qubits) is separable from the rest of the system, but it cannot check bipartitions containing an even number of qubits. Therefore, it is possible that a state has non-zero $n$-concurrence despite there being a separable bipartition. While it may be shown that the negativity in some bipartitions --- such as the potentially relevant $PC_1:AC_2\cdots C_M$ splitting --- is positive for non-zero $\mathcal{C}_{2M}$, checking every possible bipartition would be a grueling task. Therefore, with no simple alternative quantifier at our disposal, we leave this line of inquiry as conjecture.
\begin{conjecture}
    Global entanglement in $\ket{\psi_{\mathrm{T}}}$ is not sufficient for quantum advantage in telecloning a single copy of an unknown quantum state to any given receiver, for any $M\geq2$.
\end{conjecture}

\subsection{Resources for symmetric $N\to M$ telecloning}
Having provided strong evidence that two-qubit pairwise entanglement is the crucial resource for $1\to M$ telecloning, we further inspect its role in our $N\to M$ protocol. In this scenario, there are two criteria for fidelity that make sense to evaluate. The first is to simply increase the fidelity from one iteration to the next; i.e., what do we need to beat a fidelity of $\gamma(k,M)$ for the $(k+1)$th BSM? Meanwhile, the other fidelity threshold we aim to beat is $\mathcal{F}_{\mathrm{cl}}(k+1)$, as per Eq.~\eqref{eq: classical fidelity bound}. We evaluate both of these scenarios paired with postselection of the $\ket{\Phi^+}_{X_{k+1}A_{k}}$ outcome.

After $k$ successful BSMs, the PPT negativity of the reduced state $\rho_{A_kC_1}$ of qubits $A_kC_1$ is given by
\begin{equation}
    \mathcal N^{k,M}_{A_k:C_1} =\max\left\{0\,,\, \frac{U - V + \sqrt{V^2+W^2}}{2UM}\right\}\,,
\end{equation}
where $U=(k+2)(k+3)$, $V=2(k+1)(M+2)$ and $W=k(k+3)$. The above equation is found by tracing out all other qubits using Dicke bases, and then applying Eq.~\eqref{eq: negativity definition} to the reduced density matrix. If we now subject the state to a global depolarizing noise, such that 
\begin{equation}\label{eq: depo noise model}
    \rho_{A_kC_1}^\prime(\mu)=(1-\mu)\rho_{A_kC_1} + \frac{\mu}{4}\Id
\end{equation}
with $\mu\in[0,1]$, the negativity function reaches a value of zero at
\begin{equation}
    \Tilde{\mu}_{A_kC_1}(k,M) = \frac{2(U-V+\sqrt{V^2+W^2})}{(k^2+k+2)(M+2)+2\sqrt{V^2+W^2}}\,.
\end{equation}
Therefore, $\rho_{A_kC_1}^\prime(\mu)$ is entangled for $\mu<\Tilde{\mu}_{A_kC_1}(k,M)$.

Returning to the fidelity thresholds, we first examine the increase of fidelity from the previous iteration's ideal value of $\gamma(k,M)$.
\begin{theorem}\label{theorem: A_kC_1 entanglement increasing fidelity not necessary}
    For sufficiently large $M$, two-qubit entanglement in $A_{k}:C_i$ is not necessary for increasing the postselected fidelity for receiver $C_i$ when performing the $(k+1)$th BSM on $X_{k+1}A_k$ in the $N\to M$ telecloning protocol.
\end{theorem}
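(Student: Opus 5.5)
The plan is to exhibit an explicit noise model on the auxiliary qubit $A_k$. The model should destroy the two-qubit entanglement in $A_k:C_1$, which suffices by permutation symmetry of the receivers, while still letting the postselected fidelity after the $(k+1)$th BSM exceed $\gamma(k,M)$. I would take the state $\ket{\psi_k}_{A^\prime C}$ of Eq.~\eqref{eq: k successful state}, obtained after $k$ successful BSMs, and act on $A_k$ alone with a local depolarizing channel
\[
\rho\mapsto(1-\mu)\rho+\mu\,\tfrac{\Id_{A_k}}{2}\otimes\Tr_{A_k}\rho ,
\]
with $\mu\in[0,1]$. This is the physically natural counterpart of Eq.~\eqref{eq: depo noise model}. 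Its key feature is that the noisy global state is an exact convex mixture of two branches: the ideal state, with weight $1-\mu$, and the state in which $A_k$ has been replaced by a maximally mixed qubit uncorrelated with everything else, with weight $\mu$.

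The first step is to compute the postselected fidelity branch by branch. In the ideal branch, Theorem~\ref{theorem: probability} gives outcome $\ket{\Phi^+}_{X_{k+1}A_k}$ with the state-independent probability $p_{k+1}=\frac{k+1}{2(k+2)}$. Theorem~\ref{theorem: N to M fidelity} then gives receiver fidelity $\gamma(k+1,M)$. In the noisy branch, $A_k$ is maximally mixed and uncorrelated with $X_{k+1}$ and with $C$. The outcome $\ket{\Phi^+}$ therefore occurs with probability $1/4$, and the state of $C$ is untouched, so $C_1$ retains fidelity $\gamma(k,M)$. The postselected fidelity is then
\[
\mathcal{F}(\mu)=\frac{(1-\mu)p_{k+1}\gamma(k+1,M)+\tfrac{\mu}{4}\gamma(k,M)}{(1-\mu)p_{k+1}+\tfrac{\mu}{4}}.
\]
This is a weighted average of $\gamma(k+1,M)$ and $\gamma(k,M)$ with positive weight on the former whenever $\mu<1$. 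Since $\gamma(k+1,M)>\gamma(k,M)$, we get $\mathcal{F}(\mu)>\gamma(k,M)$ for every $\mu<1$. All weights are state independent, so this holds universally over $\ket{\varphi}$.

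The second step is to show that $A_k:C_1$ entanglement dies for some $\mu<1$. The reduced state is $(1-\mu)\rho_{A_kC_1}+\mu\frac{\Id}{2}\otimes\rho_{C_1}$. Here $\rho_{C_1}$ is full rank, since it equals $\rho_{\mathrm{clone}}$ with $\gamma<1$, so $\frac{\Id}{2}\otimes\rho_{C_1}$ lies in the interior of the two-qubit separable set. Hence the mixture becomes separable for $\mu$ close enough to $1$. I would make this explicit by computing the partial-transpose eigenvalues in the Dicke-traced form already used for $\mathcal N^{k,M}_{A_k:C_1}$, giving a threshold $\Tilde\mu(k,M)<1$.

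The main obstacle is reconciling this with the paper's phrasing: ``sufficiently large $M$'' suggests the authors intend the depolarizing model of Eq.~\eqref{eq: depo noise model} and the explicit threshold $\Tilde{\mu}_{A_kC_1}(k,M)$. If that route is required, I would compute the postselected fidelity $\mathcal F(\mu)$ under the corresponding noise and compare the $\mu$ at which $\mathcal F(\mu)=\gamma(k,M)$ against $\Tilde{\mu}_{A_kC_1}(k,M)$. As $M\to\infty$ the negativity threshold behaves like $O(1/M)$, because $\mathcal N^{k,M}\sim(U-V+\sqrt{V^2+W^2})/(2UM)$. The fidelity-crossing threshold, by contrast, stays bounded away from zero. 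An asymptotic expansion in $1/M$ would then show that the negativity threshold lies below the fidelity-crossing threshold for large $M$, leaving a nonempty window of $\mu$ in which the pair is separable yet the fidelity strictly increases.
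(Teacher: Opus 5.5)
Your proof is correct. Your primary route differs from the paper's, while your fallback plan is essentially the paper's proof.

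\textbf{The paper's route.} The paper applies depolarizing noise to the pair itself, $\rho_{A_kC_1}\mapsto(1-\mu)\rho_{A_kC_1}+\frac{\mu}{4}\Id$. In the noise branch the receiver is then maximally mixed too, and contributes fidelity $1/2$ with probability $1/4$. The fidelity-crossing point $\tilde{\mu}_\gamma=4(k+1)/(k^3+5k^2+10k+4)$ is independent of $M$. The negativity threshold $\tilde{\mu}_{A_kC_1}(k,M)=4\mathcal{N}^{k,M}_{A_k:C_1}/(1+4\mathcal{N}^{k,M}_{A_k:C_1})$ decays like $1/M$. Comparing the two gives the explicit condition $M>\tilde{M}(k)$, which is where ``sufficiently large $M$'' comes from.

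\textbf{Your route.} Your channel acts on $A_k$ alone and leaves the marginal of $C$ untouched, so the noise branch returns exactly $\gamma(k,M)$. Hence $\mathcal{F}(\mu)$ is a weighted average with positive weight on $\gamma(k+1,M)$ for every $\mu<1$. The inequality $\gamma(k+1,M)>\gamma(k,M)$ is strict, since $\gamma(N,M)=1-\frac{M-N}{M(N+2)}$ increases with $N$. Separability near $\mu=1$ follows from Peres--Horodecki: $\frac{\Id}{2}\otimes\rho_{C_1}$ has a strictly positive partial transpose whenever $\gamma(k,M)<1$, i.e.\ $k\le M-1$. No negativity formula is needed.

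\textbf{What each route buys.} Yours gives a stronger statement with almost no computation. The conclusion holds for every $M\ge k+1$, so the qualifier ``sufficiently large $M$'' can be dropped. This is consistent with the paper's own caveat that $\tilde{M}(k)$ is only an upper bound tied to one noise model.

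The paper's route gives a more demanding test. Its noise also degrades the receiver's own qubit, so beating $\gamma(k,M)$ is not automatic. The result carries quantitative thresholds, and the same model is reused for the classical-bound theorem that follows.

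In your model the fidelity increase is essentially automatic. You should say so explicitly: it shows that, as formulated, the statement has little content once the noise is allowed to spare the receivers. Also, your channel is not the ``natural counterpart'' of Eq.~\eqref{eq: depo noise model}. Its two-qubit marginal is $(1-\mu)\rho_{A_kC_1}+\mu\frac{\Id}{2}\otimes\rho_{C_1}$, not $(1-\mu)\rho_{A_kC_1}+\frac{\mu}{4}\Id$.
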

\begin{figure}
    \centering
    \includegraphics[width=\linewidth]{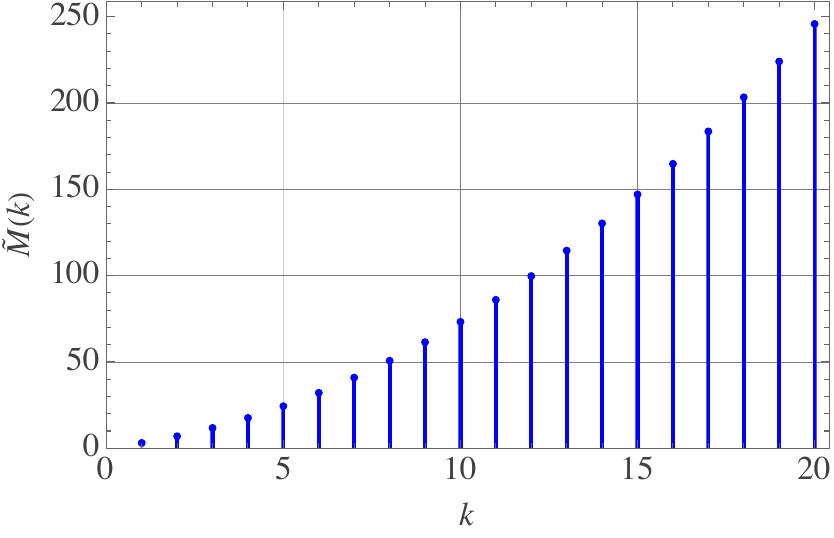}
    \caption{After $k$ prior successful BSMs, we plot the threshold value of number of receivers $\Tilde{M}(k)$ beyond which pairwise entanglement in $A_{k}:C_i$ is no longer necessary for increasing the $i$th receiver's postselected fidelity beyond $\gamma(k,M)$, for the $(k+1)$th BSM. Since this plot is based on just one simple noise model, these values are the upper bounds of this threshold value of $M$, with other noise models potentially bringing this value down for each $k$.}
    \label{fig: tilde M vs k}
\end{figure}

\textbf{Proof.} One way to probe the two-qubit entanglement in $A_{k}:C_1$ is by finding the noise strength beyond which the fidelity cannot be improved on $\gamma(k,M)$. In essence, we ask for which value of $\mu$ is
\begin{equation}
    \gamma(k,M) = \frac{(1-\mu)p_{k+1}\gamma(k+1,M)+\frac{\mu}{2}\cdot\frac{1}{4}}{(1-\mu)p_{k+1}+\frac{\mu}{4}}
\end{equation}
satisfied, since the probability of outcome $\ket{\Phi^+}_{X_{k+1}A_k}$ using state $\Id/4$ is $1/4$, with a resulting teleportation fidelity of $1/2$. Solving this equation for $\mu$ gives us a value of
\begin{equation}
    \Tilde{\mu}_\gamma = \frac{4 (k+1)}{k^3+5k^2+10k+4}\,.
\end{equation}
To test for necessity of two-qubit entanglement, we search for the conditions that satisfy the inequality $\Tilde{\mu}_{A_kC_1}< \Tilde{\mu}_\gamma$. This gives us the conditions where the fidelity may be increased from $\gamma(k,M)$ despite the state of $A_k:C_1$ being separable. Under the depolarizing noise described in Eq.~\eqref{eq: depo noise model}, this happens if and only if
\begin{equation}
    M > \frac{k\left(U - 4 + \sqrt{U^{2}-4(3k+5)}\right)}{2(2k+1)} = \Tilde{M}(k)\,.
\end{equation}
If $M>\Tilde{M}(k)$, then there must always exist a range of $\mu$ where the fidelity may increase despite the state being separable, showing that entanglement is not necessary in these conditions. A simple example is for the second BSM, where $\Tilde{M}(1)=\frac{2}{3} \left(2+\sqrt{7}\right)\approx 3.09717$. This means that, for an $N\to 4$ (or higher) protocol, no two-qubit entanglement is necessary for increasing the fidelity for a given receiver for the second BSM. We display the function $\Tilde{M}(k)$ for up to $k=20$ in Fig.~\ref{fig: tilde M vs k}, which suggests that the reliance on pairwise entanglement decreases as $M$ grows.
Since the permutational invariance of the receivers means that we do not lose any generality for selecting $C_1$, and $M$ has no upper bound, we thus prove Theorem~\ref{theorem: A_kC_1 entanglement increasing fidelity not necessary} by explicit example. $\square$

We now move on to outperforming the classical communication limit.
\begin{theorem}
    Two-qubit entanglement in $A_{k}:C_i$ is not necessary for achieving a postselected teleportation fidelity surpassing the classical communication limit for receiver $C_i$, when performing the $(k+1)$th operation on $X_{k+1}A_k$ in an $N\to M$ telecloning protocol, for any $1\leq k\leq M-1$.
\end{theorem}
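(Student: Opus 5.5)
The plan mirrors the proof of Theorem~\ref{theorem: A_kC_1 entanglement increasing fidelity not necessary}, with the classical bound $\mathcal{F}_{\mathrm{cl}}(k+1)=\frac{k+2}{k+3}$ from Eq.~\eqref{eq: classical fidelity bound} in place of $\gamma(k,M)$. I will use the global depolarizing noise of Eq.~\eqref{eq: depo noise model} on $A_kC_1$. The noiseless part contributes the outcome $\ket{\Phi^+}_{X_{k+1}A_k}$ with probability $p_{k+1}=\frac{k+1}{2(k+2)}$ (Theorem~\ref{theorem: probability}) and fidelity $\gamma(k+1,M)$ (Theorem~\ref{theorem: N to M fidelity}). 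The maximally mixed part contributes probability $1/4$ and fidelity $1/2$. The postselected fidelity is therefore
\begin{equation*}
\mathcal{F}(\mu)=\frac{(1-\mu)p_{k+1}\gamma(k+1,M)+\frac{\mu}{8}}{(1-\mu)p_{k+1}+\frac{\mu}{4}}\,.
\end{equation*}
This is monotonically decreasing in $\mu$ from $\gamma(k+1,M)$ to $1/2$. I first check that it starts above the classical bound: writing $\gamma(k+1,M)=\frac{k+2}{k+3}+\frac{k+1}{M(k+3)}$ shows $\gamma(k+1,M)>\mathcal{F}_{\mathrm{cl}}(k+1)$ for every $M$. Hence there is a unique crossing point $\Tilde{\mu}_{\mathrm{cl}}$, and quantum advantage persists for all $\mu<\Tilde{\mu}_{\mathrm{cl}}$.

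Next I solve $\mathcal{F}(\mu)=\mathcal{F}_{\mathrm{cl}}(k+1)$. Setting $c=\frac{k+2}{k+3}$, the equation becomes $(1-\mu)p_{k+1}(\gamma-c)=\mu\frac{2c-1}{8}$. Substituting the closed forms, almost everything cancels:
\begin{equation*}
\frac{\Tilde{\mu}_{\mathrm{cl}}}{1-\Tilde{\mu}_{\mathrm{cl}}}=\frac{4(k+1)}{M(k+2)}\,,\qquad \Tilde{\mu}_{\mathrm{cl}}(k,M)=\frac{4(k+1)}{M(k+2)+4(k+1)}\,.
\end{equation*}
It then suffices to prove $\Tilde{\mu}_{A_kC_1}(k,M)<\Tilde{\mu}_{\mathrm{cl}}(k,M)$ for all $1\leq k\leq M-1$. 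On the interval $\mu\in(\Tilde{\mu}_{A_kC_1},\Tilde{\mu}_{\mathrm{cl}})$ the pair $A_kC_1$ is separable, yet the receiver still beats $\mathcal{F}_{\mathrm{cl}}(k+1)$. If the negativity $\mathcal N^{k,M}_{A_k:C_1}$ already vanishes at $\mu=0$, i.e.\ $U-V+\sqrt{V^2+W^2}\leq 0$, the claim is immediate at $\mu=0$. So I only need the case where $\Tilde{\mu}_{A_kC_1}>0$.

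The main obstacle is this uniform inequality. I will cross-multiply the two positive denominators and move the terms containing $S=\sqrt{V^2+W^2}$ to one side, reducing the inequality to the form $aS<b$ with $a,b$ polynomials in $k,M$. Squaring is legitimate once the signs of $a$ and $b$ are checked on $1\leq k\leq M-1$. The result is a polynomial inequality, using $U=(k+2)(k+3)$, $V=2(k+1)(M+2)$ and $W=k(k+3)$. I expect it to hold with room to spare, because $\Tilde{\mu}_{\mathrm{cl}}\sim 4/M$ while $\Tilde{\mu}_{A_kC_1}$ decays at least as fast in $M$ for fixed $k$. I will verify positivity by substituting $M=k+1+m$ with $m\geq0$ and checking that all coefficients of the resulting polynomial in $k,m$ are nonnegative. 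The edge cases $k=1$ and $k=M-1$ will be checked explicitly as sanity checks. As in the earlier theorem, permutation invariance of the receivers removes any loss of generality in choosing $C_1$, which completes the argument by explicit example.
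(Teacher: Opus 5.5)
Your proposal is correct and follows the paper's proof essentially step for step: the same depolarizing model on $A_kC_1$, the same crossing point (your $\Tilde{\mu}_{\mathrm{cl}}=\frac{4(k+1)}{M(k+2)+4(k+1)}$ is exactly the paper's $\Tilde{\mu}_{\mathrm{c}}$, since $k(M+4)+2(M+2)=M(k+2)+4(k+1)$), and the same final comparison $\Tilde{\mu}_{A_kC_1}<\Tilde{\mu}_{\mathrm{cl}}$, which the paper only asserts. That comparison is much easier than your planned squaring and $M=k+1+m$ substitution: both thresholds have the form $x/(1+x)$, with $x=4\mathcal{N}^{k,M}_{A_k:C_1}$ and $x=\frac{4(k+1)}{M(k+2)}$ respectively, so the inequality is equivalent to $U-V+\sqrt{V^2+W^2}<2(k+1)(k+3)=U+W$, i.e.\ $\sqrt{V^2+W^2}<V+W$, which holds since $V,W>0$ for $k\geq 1$ (with equality at $k=0$, where $W=0$, matching the necessity of $P:C_i$ entanglement in the $1\to M$ case).
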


\textbf{Proof.} In a fashion akin to the proof of Theorem~\ref{theorem: A_kC_1 entanglement increasing fidelity not necessary}, we seek to solve
\begin{equation}
    \mathcal{F}_{\mathrm{cl}}(k+1) =  \frac{(1-\mu)p_{k+1}\gamma(k+1,M)+\frac{\mu}{2}\cdot\frac{1}{4}}{(1-\mu)p_{k+1}+\frac{\mu}{4}}\,,
\end{equation}
giving the solution
\begin{equation}
    \Tilde{\mu}_{\mathrm{c}} = \frac{4 (k+1)}{k (M+4)+2 (M+2)}\,.
\end{equation}
It can be shown that $\Tilde{\mu}_{A_kC_1}<\Tilde{\mu}_{\mathrm{c}}$ for all valid $k$ and $M$. Therefore, there always exists a range of $\mu$ where we have a separable state, yet the classical-communication limit is surpassed. This implies that two-qubit entanglement is not necessary here, in contrast to the $1\to M$ case where it is necessary. $\square$
 
\section{Conclusions}\label{sec: conclusions}
In this manuscript, we introduced a novel, experimentally accessible, symmetric $N\to M$ telecloning protocol for arbitrary unknown qubit states. Although probabilistic, a successful protocol achieves optimal cloning fidelity via postselection of certain BSM outcomes, with the success probability of each individual BSM increasing throughout the protocol. Furthermore, we highlighted the fact that the average fidelity of a given BSM is equal to the optimal fidelity of the previous BSM, and thus the protocol remains structurally consistent as $N$ increases. We have shown that unsuccessful protocols can be considered state-dependent symmetric $N\to M$ telecloning protocols, with Alice and all receivers being able to infer information about the unknown state. The amount of information they learn in these scenarios increases with the number of successful measurements alongside a single unsuccessful measurement, and also depends on the probability distribution of input states. In particular, this information gain heavily favors distributions based on the Pauli eigenstates.

Additionally, we assessed how the entanglement of the resource state under the global depolarization and local dephasing noise affects the teleportation fidelity for each receiver. We found that, for the standard $1\to M$ protocol, it is the pairwise entanglement between the port qubit and each receiver that is crucial to surpass the classical-communication bound on fidelity, despite the resource state possessing GME and global entanglement. In fact, we have shown that these multipartite entanglement classes are not even necessary for optimal performance of the protocol, and have further provided evidence to suggest that global entanglement may not be sufficient either. For our $N\to M$ protocol, we evaluated the entanglement-fidelity dynamics for two different benchmarks. The first was to increase the fidelity from one BSM to the next, and exemplified that pairwise entanglement is not generally necessary for this feat. The second was to outperform the optimal fidelity when limited to classical communication; we explicitly showed that entanglement is not necessary for this when $N\geq 2$.

\section{Outlook}\label{sec: outlook}
We now outline several future research directions.

Perhaps the biggest drawback of this protocol is its probabilistic nature. While the probability does indeed increase with $k$, the overall probability will still decrease with $N$ since these probabilities compound. Although our results suggest that this total probability is hence determined by $N$, we hypothesize that it is instead the number of measurements made. We therefore think an interesting line of study would be to consider more generalized sequential/parallel measurements (such as the global POVMs in Ref.~\cite{Dur_1999}) made on groups of copies.
%For example, if we have $N=4$ copies in system $X_1X_2X_3X_4$, we could consider making joint measurements on pairs of copies and auxiliaries, such as $X_1X_2PA_1$ and $X_3X_4A_2A_3$. This first measurement could indeed be deterministic when paired with LOCC, with any subsequent measurements perhaps being probabilistic. In fact, these groups of copies may not necessarily need to be the same size; for our example, there is no reason, in principle, why we could not perform a joint measurement on $X_1X_2X_3PA_1A_2$ before making a Bell-type measurement on $X_4A_3$.
Such a line of inquiry could lead to a trade-off relation between success probability and experimental simplicity, as well as offering deeper insight into the relationship with remote information gain.

Other natural extensions to our protocol (inspired by established results for the $1\to M$ case) include developing an \textit{asymmetric} $N\to M$ telecloning protocol~\cite{Ferraro_2005, Chen_2007}, giving more freedom to how the quantum information may be spread among the receivers. Either symmetric or asymmetric $N\to M$ protocols could also be generalized to $d$-dimensional (or even continuous-variable) systems~\cite{Araneda_2016, Zhang_2008_cont, vanLoock_2001,Ghiu_2003}. Moreover, since our protocol has only been explicitly shown for pure states, one may consider its suitability for unknown mixed states~\cite{Fanizza_2026}.

Finally, further investigations could be conducted into the necessary and/or sufficient resources for $N\to M$ telecloning. The fact that pairwise entanglement is generally not necessary for $N\geq 2$ suggests that there may be a different resource at play to allow for quantum advantage. Perhaps a contender for this could be a more general form of quantum correlations, such as the quantum discord~\cite{Ollivier,Henderson,Modi}.
In any case, the necessity of entanglement for $1\to M$ telecloning means that it may, in principle, be used for entanglement verification on quantum networks.

\acknowledgments
A.G.H. thanks Mauro Paternostro and Giorgio Zicari for their guidance throughout the early stages of this project, as well as Mio Murao for insightful discussions. This work was supported by the Institute of Information \& Communications Technology Planning \& Evaluation (IITP) grant, funded by the Korea government (MSIT) (No. 2022-0-00463), the Department for the Economy Northern Ireland under the US-Ireland R\&D Partnership Programme, and Munster Technological University’s TU RISE Research to Impact co-funded by the Government of Ireland and the European Union through the ERDF Southern, Eastern $\&$ Midland Regional Programme 2021-27. H.K. is supported by KIAS individual grant number CG085302 at the Korea Institute for Advanced Study.

\section*{Statement on use of Generative AI}
The authors confirm that various models of Claude were used to assist with the formulation of some of the proofs, particularly for finding relevant 
combinatorial identities. Meanwhile, all semi-colons, em-dashes, and Oxford commas in this manuscript were human-generated.
 
\bibliography{references.bib}
\begin{widetext}
\appendix
\section{Derivation of $\ket{\psi_k}_{A^\prime C}$}\label{app: derivation of state}
Here, we justify Eq.~\eqref{eq: normalized state with general probablity}. For the sake of convenience, we again define the state after $k$ successful BSMs as
\begin{equation}
\begin{split}
    \ket{\psi_k}_{A^\prime C}=&\left(\prod_{i=1}^{k} \frac{1}{\sqrt{p_i}}\right) \sum_{t=0}^{k}\sum_{j=0}^{M-t} \sqrt{\frac{(M-j)_{t}^\downarrow (j)_{k-t}^\downarrow}{2^{k}(M+1)_{k+1}^\downarrow}}
    % &\ \ \ \ \ \ \ \times\begin{pmatrix}
    %     k\\
    %     t
    % \end{pmatrix}
    % \alpha^{k-t}\beta^t\ket{M-t-j}_{A^\prime}^{M-k}\ket{M-j}_C^M\,.
    \begin{pmatrix}
        k\\
        t
    \end{pmatrix}
    \alpha^{k-t}\beta^t\big|{D_{M-t-j}^{M-k}}\big\rangle_{A^\prime}\ket{D_{M-j}^{M}}_C\,.
\end{split}
\end{equation}
To aid understanding of the origin of each term, we start by considering the case of the second BSM (i.e., $k=2$). Referring back to Eq.~\eqref{eq: 1 to M transformed state}, the joint state of $X_2$ and $AC$ (before the second BSM) is
\begin{equation}
\begin{split}
    &\ket{\varphi}_{X_2}\otimes \ket{\psi_1}_{AC}=(\alpha\ket{0}+\beta\ket{1})_{X_2} \otimes \sum_{j=0}^{M-1} \underbrace{\sqrt{\frac{2(M-j)}{(M+1)M}}}_{c_j}\Big[\alpha 
    % \ket{j}^{M-1}_A \ket{j}^M_C + \beta \ket{M-1-j}^{M-1}_A \ket{M-j}_C^M
    \ket{D_j^{M-1}}_A \ket{D_j^M}_C + \beta \big|{D_{M-1-j}^{M-1}}\big\rangle_A \ket{D_{M-j}^M}_C
    \Big]\,.
\end{split}
\end{equation}
Now, to find the impact of outcome $\ket{\Phi^+}_{X_2A_1}$ on the above state, we consider factoring out the states $\ket{00}_{X_2A_1}$ and $\ket{11}_{X_2A_1}$. For the former, we have
\begin{equation}\label{eq: 00}
\begin{split}
    &\ket{00}_{X_2A_1}\otimes c_j \Bigg[\alpha^2 \begin{pmatrix}
        M-1\\
        j
    \end{pmatrix}
    \begin{pmatrix}
        M-2\\
        j
    \end{pmatrix}^{\frac{1}{2}}
    % \ket{j}_{A^\prime}^{M-2}\ket{j}_C^M
    \ket{D_j^{M-2}}_{A^\prime}\ket{D_j^M}_C+\alpha\beta \begin{pmatrix}
        M-1\\
        j
    \end{pmatrix}^{-\frac{1}{2}}
    \begin{pmatrix}
        M-2\\
        j-1
    \end{pmatrix}^{\frac{1}{2}}
    % \ket{M-1-j}_{A^\prime}^{M-2}\ket{M-j}_C^M
    \big|{D_{M-1-j}^{M-2}}\big\rangle_{A^\prime}\ket{D_{M-j}^M}_C
    \Bigg]
\end{split}
\end{equation}
for a given value of $j$.
Similarly, we have
\begin{equation}\label{eq: 11}
    \begin{split}
        &\ket{11}_{X_2A_1} \otimes c_j \Bigg[\alpha\beta \begin{pmatrix}
        M-1\\
        j
    \end{pmatrix}^{-\frac{1}{2}}
    \begin{pmatrix}
        M-2\\
        j-1
    \end{pmatrix}^{\frac{1}{2}}
    % \!\ket{j-1}_{A^\prime}^{M-2}\!\ket{j}_C^M\!+
    \ket{D_{j-1}^{M-2}}_{A^\prime}\ket{D_j^M}_C\beta^2\begin{pmatrix}
        M-1\\
        j
    \end{pmatrix}^{-\frac{1}{2}}\!
    \begin{pmatrix}
        M-2\\
        j
    \end{pmatrix}^{\frac{1}{2}}
    % \!\ket{M-2-j}_{A^\prime}^{M-2}\!\ket{M-j}_C^M
    \big|{D_{M-2-j}^{M-2}}\big\rangle_{A^\prime}\!\ket{D_{M-j}^M}_C
    \Bigg]\,.
    \end{split}
\end{equation}
We then notice the identities
\begin{align}\label{eq: identity 1}
    c_j\begin{pmatrix}
        M-1\\
        j
    \end{pmatrix}^{-\frac{1}{2}}\!
    \begin{pmatrix}
        M-2\\
        j
    \end{pmatrix}^{\frac{1}{2}}&= \sqrt{\frac{2(M-j)(M-1-j)}{(M+1)M(M-1)}}\\\label{eq: identity 2}
    c_j\begin{pmatrix}
        M-1\\
        j
    \end{pmatrix}^{-\frac{1}{2}}\!
    \begin{pmatrix}
        M-2\\
        j-1
    \end{pmatrix}^{\frac{1}{2}}&= \sqrt{\frac{2(M-j)j}{(M+1)M(M-1)}}\,,
\end{align}
allowing us to factor out the state $\ket{\Phi^+}_{X_2A_1}$ as
\begin{equation}\label{eq: phi+ state unsimplified}
    \begin{split}
        &\ket{\Phi^+}_{X_2A_1} \otimes\Bigg\{ \sum_{j=0}^{M-2} \Bigg[ \sqrt{\frac{(M-j)(M-1-j)}{M(M+1)(M-1)}}
        % \left(\alpha^2\ket{j}^{M-2}_{A^\prime} \ket{j}^M_C + \beta^2\ket{M-2-j}_{A^\prime}^{M-2}\ket{M-j}_C^M\right)
        \left(\alpha^2\ket{D_j^{M-2}}_{A^\prime} \ket{D_j^M}_C + \beta^2\big|D_{M-2-j}^{M-2}\big\rangle_{A^\prime}\ket{D_{M-j}^M}_C\right)
        \Bigg]\\
    &+\sum_{j=0}^{M-2}\alpha\beta \sqrt{\frac{(M-j)j}{M(M+1)(M-1)}}
    % \ket{M-1-j}^{M-2}_{A^\prime} \ket{M-j}^M_C
    \big|D_{M-1-j}^{M-2}\big\rangle_{A^\prime} \ket{D_{M-j}^M}_C
    +\sum_{j=0}^{M-2} \alpha\beta \sqrt{\frac{(M-j)j}{M(M+1)(M-1)}}
    % \ket{j-1}^{M-2}_{A^\prime} \ket{j}_C^M
    \ket{D_{j-1}^{M-2}}_{A^\prime} \ket{D_j^M}_C
    \Bigg\}\,,
    \end{split}
\end{equation}
where the $j=M-1$ terms vanish. In the second line, we may write the two sums as a single one, since taking $j\to M-j$ gives equivalent summation limits and results in the same sum of states. Also, applying this logic to the $\alpha^2$ terms, we see that
\begin{align}
\begin{split}
    \sum_{j=0}^{M-2}\alpha\beta \sqrt{\frac{(M-j)j}{M(M+1)(M-1)}}
    % \ket{M-1-j}^{M-2}_{A^\prime} \!\ket{M-j}^M_C
    \big|{D_{M-1-j}^{M-2}}\big\rangle_{A^\prime} \!\ket{D_{M-j}^M}_C
    &\equiv\sum_{j=0}^{M-2} \alpha\beta \sqrt{\frac{(M-j)j}{M(M+1)(M-1)}}
    % \ket{j-1}^{M-2}_{A^\prime} \ket{j}_C^M
    \ket{D_{j-1}^{M-2}}_{A^\prime} \ket{D_j^M}_C
\end{split}\\
\begin{split}
    \sum_{j=0}^{M-2} \sqrt{\frac{(M-j)(M-1-j)}{M(M+1)(M-1)}}\alpha^2
    % \ket{j}^{M-2}_{A^\prime} \ket{j}^M_C
    \ket{D_j^{M-2}}_{A^\prime} \ket{D_j^M}_C
    &\equiv \sum_{j=0}^{M-2} \sqrt{\frac{j(j-1)}{M(M+1)(M-1)}}\alpha^2
    % \ket{M-j}^{M-2}_{A^\prime} \ket{M-j}^M_C
    \big|{D_{M-j}^{M-2}}\big\rangle_{A^\prime} \ket{D_{M-j}^M}_C\,.
\end{split}
\end{align}
Thus, for success probability $p_2$, the normalized state becomes
\begin{equation}
\begin{split}
    \ket{\psi_2}_{A^\prime C}=&\left(\prod_{i=1}^{2} \frac{1}{\sqrt{p_i}}\right) \sum_{t=0}^{2}\sum_{j=0}^{M-t} \sqrt{\frac{(M-j)_{t}^\downarrow (j)_{2-t}^\downarrow}{2^{2}(M+1)_{3}^\downarrow}}
    \begin{pmatrix}
        2\\
        t
    \end{pmatrix}
    % \alpha^{2-t}\beta^t\ket{M-t-j}_{A^\prime}^{M-2}\ket{M-j}_C^M
    \alpha^{2-t}\beta^t\big|{D_{M-t-j}^{M-2}}\big\rangle_{A^\prime}\ket{D_{M-j}^M}_C\,,
\end{split}
\end{equation}
where the factor of $\sqrt{1/2^2}$ comes from $p_1=1/4$, as seen in Eq.~\eqref{eq: XPAC pure state}.

The above logic may be applied generally for $k$ successful BSMs. Any states with the same $\alpha^{k-t}\beta^t$ coefficients will have had $k-t$ zeros and $t$ ones `removed', also giving the same falling factorials of $\sqrt{(j)_{k-t}^\downarrow}$ and $\sqrt{(M-j)_t^\downarrow}$ respectively. This, as well as the $(M+1)_{k+1}^\downarrow$ term in the denominator, is a result of the identities in Eqs.~\eqref{eq: identity 1} and \eqref{eq: identity 2} being generalized to $k$ successful BSMs, with the $k$th BSM contributing the factor $\sqrt{1/(M+1-k)}$. Meanwhile, the $2^k$ term comes from the equal superposition of states present in $\ket{\Phi^+}$ each time a successful BSM is made, an example of this being the canceling of the factor of 2 from the numerator in Eq.~\eqref{eq: phi+ state unsimplified}. Similar logic may be used to find the other states shown in Eqs.~\eqref{eq: Phi pm states AC} and \eqref{eq: Psi pm states AC}. See Appendix~\ref{app: multiple outcomes} for more detail on how the binomial coefficient is transformed by each BSM outcome.

\section{Proof of Theorem~\ref{theorem: probability}}\label{app: proof of theorem 1}
Let us start by not assuming the form of the $i$th BSM success probability, which we here denote as $\Tilde{p}_i$. Rather, we wish to prove that $\Tilde{p}_i\equiv p_i$. Using the state in Eq.~\eqref{eq: normalized state with general probablity}, we can work out the probability $\Tilde{p}_{k+1}$ provided we have knowledge of all $\{\Tilde{p}_1,\cdots,\Tilde{p}_k\}$ that come before it (with $1\leq k\leq N-1$). From Eq.~\eqref{eq: XPAC pure state}, we know that $\Tilde{p}_1=\frac{1}{4}=p_1$, and so we have enough information to obtain $\Tilde{p}_{k+1}$ through recursion. We therefore have
\begin{equation}\label{eq: tilde p k+1}
\begin{split}
    \Tilde{p}_{k+1} = \left(\prod_{i=1}^k \frac{1}{\Tilde{p}_i}\right) &\sum_{t=0}^{k+1}\sum_{j=0}^{M-t} \frac{(M-j)_{t}^\downarrow (j)_{k-t+1}^\downarrow}{2^{k+1}(M+1)_{k+2}^\downarrow}\begin{pmatrix}
        k+1\\
        t
    \end{pmatrix}^2
    \abs{\alpha}^{2(k+1-t)} \abs{\beta}^{2t}\,.
\end{split}
\end{equation}
The $j$-dependent terms can be equivalently written as
\begin{equation}
    \sum_{j=0}^{M-t}(M-j)_{t}^\downarrow (j)_{k-t+1}^\downarrow = T_{k,t}\sum_{j=0}^{M-t}
    \begin{pmatrix}
        M-j\\
        t
    \end{pmatrix}
    \begin{pmatrix}
        j\\
        k-t+1
    \end{pmatrix}\,,
\end{equation}
where $T_{k,t}=(k-t+1)!t!$. This can then be further simplified by completely eliminating the $j$-dependence as
\begin{equation}
    \sum_{j=0}^{M-t}(M-j)_{t}^\downarrow (j)_{k-t+1}^\downarrow= T_{k,t}
    \begin{pmatrix}
        M+1\\
        k+2
    \end{pmatrix}\,.
\end{equation}
Here, we have used the \textit{upside-down Chu--Vandermonde identity}~\cite{Riordan_1979,Grinberg_2022,Krapf_2026}, which is stated, for $a,b,c\in\mathbb{N}$, as
\begin{equation}\label{eq: upside down chu vandermonde}
    \sum_{j=0}^a
    \begin{pmatrix}
        j\\
        b
    \end{pmatrix}
    \begin{pmatrix}
        a-j\\
        c
    \end{pmatrix}
    \equiv
    \begin{pmatrix}
        a+1\\
    b+c+1\end{pmatrix}\,.
\end{equation}
This identity is appropriate here since all terms for $j=M-t+1$ through $j=M$ are zero. By evaluating and simplifying the rest of the terms, Eq.~\eqref{eq: tilde p k+1} can be written as
\begin{equation}
\begin{split}
    \Tilde{p}_{k+1} &= \left(\prod_{i=1}^k \frac{1}{\Tilde{p}_i}\right)\frac{(k+1)!\begin{pmatrix}
        M+1\\
        k+2
    \end{pmatrix}}{2^{k+1}(M+1)_{k+2}^\downarrow} \sum_{t=0}^{k+1} \begin{pmatrix}
        k+1\\
        t
    \end{pmatrix}
    \abs{\alpha}^{2(k+1-t)} \abs{\beta}^{2t}\\
    &= \left(\prod_{i=1}^k \frac{1}{\Tilde{p}_i}\right)\frac{1}{2^{k+1}(k+2)}\,,
\end{split}
\end{equation}
where, in the last line, we have used the binomial theorem and the normalization constraint with
\begin{equation}
    \sum_{t=0}^{k+1} \begin{pmatrix}
        k+1\\
        t
    \end{pmatrix}
    \abs{\alpha}^{2(k+1-t)} \abs{\beta}^{2t} \equiv (\abs{\alpha}^{2}+ \abs{\beta}^{2})^{k+1} = 1\,.
\end{equation}
Finally, we substitute $\Tilde{p}_i=p_i$, where $\prod_{i=1}^k (1/p_i)=2^k(k+1)$, and simplify to find that
\begin{equation}
    \Tilde{p}_{k+1} = \frac{k+1}{2(k+2)} = p_{k+1}\,.
\end{equation}
Since $p_i$ satisfies both the base case and the recursion, it must therefore be the probability of the $i$th BSM being successful. The probability is clearly not a function of $M$, thus concluding the proof. $\square$

\section{Proof of Lemma~\ref{lemma: covariance}}\label{app: proof of lemma 2}
Let us start by showing that the $1\to M$ protocol is covariant to serve as a base case for our proof. Recall the \textit{ricochet property} (sometimes called the \textit{transpose trick})~\cite{Wilde,Larocca_2022} of a $d$-dimensional maximally entangled bipartite state $\ket{\Phi}_{AB}$ as
\begin{equation}
    (K_A\otimes \Id_B) \ket{\Phi}_{AB} \equiv (\Id_A \otimes K_B^T) \ket{\Phi}_{AB}\,,
\end{equation}
where
\begin{equation}
    \ket{\Phi}_{AB} = \frac{1}{\sqrt{d}}\sum_{i=0}^{d-1}\ket{i}_A\ket{i}_B
\end{equation}
and $K$ is any linear operator. From this identity, we can further establish that
\begin{equation}
    (V_A^* \otimes V_B)\ket{\Phi}_{AB} = (V_A^TV_A^* \otimes\, V_B^\dagger V_B)\ket{\Phi}_{AB} = \ket{\Phi}_{AB}\,,
\end{equation}
for any unitary operator $V$. Now, since the telecloning resource state $\ket{\psi_{\mathrm{T}}}$ can be written as an $(M+1)$-dimensional maximally entangled state between systems $PA$ and $C$, as shown in Eq.~\eqref{eq: telecloning resource state}, we thus have the identity
\begin{equation}\label{eq: telecloning state invariance}
    \left(V^{* \otimes M}_{PA} \otimes V_{C}^{\otimes M}\right)\ket{\psi_{\mathrm{T}}} = \ket{\psi_{\mathrm{T}}}\,.
\end{equation}
For the $1\to M$ protocol with a rotated input, we have
\begin{equation}\label{eq: 1 to M covariance}
\begin{split}
    \bra{\Phi^+}_{XP}U_X\ket{\varphi}_X\ket{\psi_{\mathrm{T}}}&=\bra{\Phi^+}_{XP}U^T_P\ket{\varphi}_X\ket{\psi_{\mathrm{T}}}\\
    &= \left(U^{*\otimes M-1}_{A}\otimes U_C^{\otimes M}\right)\bra{\Phi^+}_{XP}\ket{\varphi}_X\ket{\psi_{\mathrm{T}}}\,,
\end{split}
\end{equation}
where, in the last line, we use Eq.~\eqref{eq: telecloning state invariance}. We have assumed postselection of $\ket{\Phi^+}_{XP}$, but it can be similarly shown for the other BSM outcomes after application of the appropriate LOCC outlined in Table~\ref{tab: telecloning}. It is implied by Eq.~\eqref{eq: 1 to M covariance} that the $1\to M$ protocol is indeed covariant.

Now, suppose Alice has already had $k$ BSMs with each yielding outcome $\ket{\Phi^+}$. Let us assume that, after $k$ BSMs, the state of the receivers and the remaining auxiliary system can be written as
\begin{equation}\label{eq: k assumption of state}
    \ket{\psi_k(U\ket{\varphi})}_{A^\prime C} = \left(U_{A_k\cdots A_{M-1}}^{*\otimes M-k} \otimes U_C^{\otimes M}\right)\ket{\psi_k(\ket{\varphi})}_{A^\prime C}\,.
\end{equation}
Say now that we perform BSM number $k+1$, postselecting on outcome $\ket{\Phi^+}_{X_{k+1}A_k}$. The unnormalized state will be
\begin{equation}
\begin{split}
    \ket{\Tilde{\psi}_{k+1}(U\ket{\varphi})}_{A^\prime C} &= \bra{\Phi^+}_{X_{k+1}A_k} \ket{\varphi^\prime}_{X_{k+1}} \ket{\psi_{k}(U\ket{\varphi})}_{A^\prime C}  \\
    &=\Tilde{U} \bra{\Phi^+}_{X_{k+1}A_k} \left(\alpha^\prime\ket{0}+\beta^\prime\ket{1}\right)_{X_{k+1}} U_{A_k}^{*}\ket{\psi_k(\ket{\varphi})}_{A^\prime C} 
\end{split}
\end{equation}
where $\ket{\varphi^\prime}=U\ket{\varphi} = \alpha^\prime\ket{0} + \beta^\prime\ket{1}$, $\Tilde{U}=U_{A_{k+1}\cdots A_{M-1}}^{*\otimes M-1-k} \otimes U_C^{\otimes M}$. Noticing that
\begin{equation}
\begin{split}
    \bra{\Phi^+}_{X_{k+1}A_k} \ket{\varphi^\prime}_{X_{k+1}} &= \frac{1}{\sqrt{2}}(\alpha^\prime\bra{0} + \beta^\prime \bra{1})_{A_k}\\
    &=\frac{1}{\sqrt{2}}\bra{\varphi^{\prime*}}_{A_k}\,,
\end{split}
\end{equation}
we can write the state as
\begin{equation}
    \ket{\Tilde{\psi}_{k+1}(U\ket{\varphi})}_{A^\prime C} = \frac{1}{\sqrt{2}} \Tilde{U}  \bra{\varphi^{\prime*}}_{A_k} U_{A_k}^{*}\ket{\psi_k(\ket{\varphi})}_{A^\prime C}  \,.
\end{equation}
We further notice that
\begin{equation}
    \alpha^\prime\bra{0} + \beta^\prime\bra{1} = (\alpha\bra{0} + \beta\bra{1})U^T \,,
\end{equation}
and so
\begin{equation}
    \left(\alpha^\prime\bra{0}+\beta^\prime\bra{1}\right)_{A_{k}} U_{A_k}^{*} = \alpha\bra{0}_{A_{k}} + \beta\bra{1}_{A_{k}}\,.
\end{equation}
We may therefore finally write the normalized state
\begin{equation}
\begin{split}
    \ket{\psi_{k+1}(U\ket{\varphi})}_{A^\prime C} &= \frac{\mathcal{G}}{\sqrt{2}} \left(\alpha\bra{0}+\beta\bra{1}\right)_{A_{k}} \Tilde{U} \ket{\psi_k(\ket{\varphi})}_{A^\prime C}  \\
    &=   \left(U_{A_{k+1}\cdots A_{M-1}}^{*\otimes M-1-k} \otimes U_C^{\otimes M}\right) \ket{\psi_{k+1}(\ket{\varphi})}_{A^\prime C}\,,
\end{split}
\end{equation}
where $\mathcal{G}=\sqrt{1/p_{k+1}}$. Since the assertion of Eq.~\eqref{eq: k assumption of state} is true for $k=1$, by induction, the $N\to M$ telecloning protocol is thus covariant for all integers $1\leq N \leq M$. This concludes the proof. $\square$

\section{Multiple unsuccessful outcomes}\label{app: multiple outcomes}
We can see in Eqs.~\eqref{eq: Phi pm states AC} and \eqref{eq: Psi pm states AC} that the outcome of a BSM does not affect the coefficients $\mathcal{B}_{k,M}^{j,t}$ associated with each state. That is to say, all outcomes transform these values via $k\to k+1$ for the $(k+1)$th measurement. What the specific outcome instead determines, is the amplitude of each state via the weighting of the $\alpha^{k-t}$ and $\beta^t$ coefficients by sums of binomials (as well as the overall normalization factor $\mathcal{A}_{k,M}^\Pi$). As briefly mentioned in Sec.~\ref{sec: performances of unsuccessful outcomes}, the reason why $\ket{\Phi^+}_{X_{k+1}A_{k}}$ is so crucial for the $(k+1)$th BSM is that it transforms the weighting, for a given value of $t$, as
\begin{equation}
\begin{split}
    &\alpha^{k-t}\beta^t\begin{pmatrix}
        k\\
        t
    \end{pmatrix} \xrightarrow{\Phi^+} \alpha^{k+1-t}\beta^t\begin{pmatrix}
        k+1\\
        t
    \end{pmatrix}\,.
\end{split}
\end{equation}
In fact, if we have $N=K+1$ total measurements comprising $K$ successful outcomes and a single unsuccessful outcome, the state is actually independent of the value of $k$ for which the unsuccessful measurement occurred. The final state (and overall probability) will be the same.

We now investigate how these coefficients transform in the event of multiple unsuccessful BSM outcomes.
In the event of outcome $\ket{\Phi^-}_{X_{K+1}A_K}$ (following $K$ previous successful outcomes), the coefficient for index $t$ instead transforms as
\begin{equation}
    \begin{split}
        \alpha^{K-t}\beta^t\begin{pmatrix}
        K\\
        t
    \end{pmatrix}
    \xrightarrow{\Phi^-} \alpha^{K+1-t}\beta^t\Bigg[\begin{pmatrix}
        K\\
        t
    \end{pmatrix} -\begin{pmatrix}
        K\\
        t-1
    \end{pmatrix}\Bigg]\,.
    \end{split}
\end{equation}
If we then perform another BSM and get the outcome $\ket{\Phi^-}_{X_{K+2}A_{K+1}}$, then we have
\begin{equation}
    \begin{split}
        &\alpha^{K+1-t}\beta^t\Bigg[\begin{pmatrix}
        K\\
        t
    \end{pmatrix} -\begin{pmatrix}
        K\\
        t-1
    \end{pmatrix}\Bigg]\xrightarrow{\Phi^-} \alpha^{K+2-t}\beta^t \Bigg[\begin{pmatrix}
        K\\
        t
    \end{pmatrix} - 2\begin{pmatrix}
        K\\
        t-1
    \end{pmatrix} + \begin{pmatrix}
        K\\
        t-2
    \end{pmatrix}\Bigg]\,.
    \end{split}
\end{equation}
No generality is lost here, since the transformations induced by a successful measurement commute with any number of transformations resulting from a singular type of unsuccessful outcome. That is to say, the unsuccessful outcomes do not generally commute with each other, but $\nu$ unsuccessful outcomes of the same kind (such as $\ket{\Phi^-}$) do commute with $K$ successful outcomes, for any $\nu+K=N\leq M$.

We find that each time we obtain the outcome $\ket{\Phi^-}$, the square brackets undergo a \textit{backward difference operation} $\nabla$, where we define
\begin{equation}
    \nabla f(t) = f(t) - f(t-1)
\end{equation}
for a function $f$. In our scenario, we therefore have
\begin{equation}
    \nabla^\nu \begin{pmatrix}
        K\\
        t
    \end{pmatrix}\,,
\end{equation}
where
\begin{equation}
\begin{split}
    \nabla^0 \begin{pmatrix}
        K\\
        t
    \end{pmatrix} = \begin{pmatrix}
        K\\
        t
    \end{pmatrix}\ \ ,\ \ 
    \nabla^1 \begin{pmatrix}
        K\\
        t
    \end{pmatrix} = \begin{pmatrix}
        K\\
        t
    \end{pmatrix} - \begin{pmatrix}
        K\\
        t-1
    \end{pmatrix}\,.
\end{split}
\end{equation}
We may then write $\nabla = \Id - E^{-1}$, where $\Id$ is the identity and $E^{-1}f(t)=f(t-1)$. Since $\Id$ and $E^{-1}$ commute, this can be expanded by the binomial theorem as
\begin{equation}
    \nabla^\nu f(t) = (\Id - E^{-1})^\nu f(t) = \sum_{i=0}^\nu \begin{pmatrix}
        \nu\\
        i
    \end{pmatrix}
    (-1)^i (E^{-1})^i f(t)\,,
\end{equation}
where $(E^{-1})^i f(t) = f(t-i)$. We therefore have
\begin{equation}
    \nabla^\nu \begin{pmatrix}
        K\\
        t
    \end{pmatrix} = \sum_{i=0}^\nu (-1)^i \begin{pmatrix}
        \nu\\
        i
    \end{pmatrix}
     \begin{pmatrix}
        K\\
        t-i
    \end{pmatrix}\,.
\end{equation}

Following similar analysis of the $\ket{\Psi^\pm}$ outcomes, we arrive at the operators
\begin{align}
    \nabla^\nu_{\Phi^+}\! \begin{pmatrix}
        K\\
        t
    \end{pmatrix}\alpha^{K-t}\beta^t &=
     \begin{pmatrix}
        K+\nu\\
        t
    \end{pmatrix}
    \alpha^{K+\nu-t}\beta^t\\
    \nabla^\nu_{\Phi^-}\! \begin{pmatrix}
        K\\
        t
    \end{pmatrix}\alpha^{K-t}\beta^t &= \alpha^{K+\nu-t}\beta^t\sum_{i=0}^\nu \varepsilon_i \begin{pmatrix}
        \nu\\
        i
    \end{pmatrix}
     \begin{pmatrix}
        K\\
        t-i
    \end{pmatrix}\\
    \nabla^\nu_{\Psi^+} \begin{pmatrix}
        K\\
        t
    \end{pmatrix}\alpha^{K-t}\beta^t &= \sum_{i=0}^\nu  \begin{pmatrix}
        \nu\\
        i
    \end{pmatrix}
     \begin{pmatrix}
        K\\
        t-i
    \end{pmatrix}\alpha^{K+2i-t}\beta^{t+\nu-2i}\\
    \nabla^\nu_{\Psi^-} \begin{pmatrix}
        K\\
        t
    \end{pmatrix}\alpha^{K-t}\beta^t &= \sum_{i=0}^\nu \varepsilon_i \begin{pmatrix}
        \nu\\
        i
    \end{pmatrix}
     \begin{pmatrix}
        K\\
        t-i
    \end{pmatrix}\alpha^{K+2i-t}\beta^{t+\nu-2i}
\end{align}
where $\varepsilon_i=(-1)^i$. Each operator commutes with $\nabla_{\Phi^+}$, but generally not with each other. From a probability-and-fidelity perspective, the order does not matter if the only outcome types are $\ket{\Phi^+}$ and one other, but it does matter more generally.

The probability functions, $p_{\nu,K}^{\Pi}$, of $K$ outcomes of $\ket{\Phi^+}$ and $\nu$ outcomes of $\ket{\Pi}$ (in any permutation) are
\begin{align}
    p_{\nu,K}^{\Pi} &= \sum_{t=0}^{K+\nu} \frac{(K+\nu-t)!t!}{2^{K+\nu} (K+\nu+1)!}\abs{\nabla^\nu_{\Pi} \begin{pmatrix}
        K\\
        t
    \end{pmatrix}\alpha^{K-t}\beta^t}^2\,.
\end{align}
We perform some non-exhaustive numerical analyses which suggest that, as $\nu$ increases, the average fidelity (unsurprisingly) decreases, yet the state-dependent fidelity remains maximal for the optimal Pauli eigenstates discussed in Sec.~\ref{sec: performances of unsuccessful outcomes} and Appendix~\ref{app: psi outcomes}. Furthermore, multiple unsuccessful measurements may result in an increase in confidence that $\ket{\varphi}_X$ is close to a specific eigenbasis, and so this particular line of inquiry may be insightful for the information gain analysis conducted in Sec.~\ref{sec: quantifying info gain}. One may also consider the effect that multiple types of unsuccessful measurements (e.g., $\ket{\Phi^-}$ and $\ket{\Psi^+}$) may have on the performance of the protocol. Continuing this investigation further is beyond the scope of this manuscript; we therefore leave this open for future works.

\section{Fidelity for outcomes $\ket{\Psi^\pm}_{X_kA_{k-1}}$}\label{app: psi outcomes}
\begin{figure*}
    \centering
    \includegraphics[width=\linewidth]{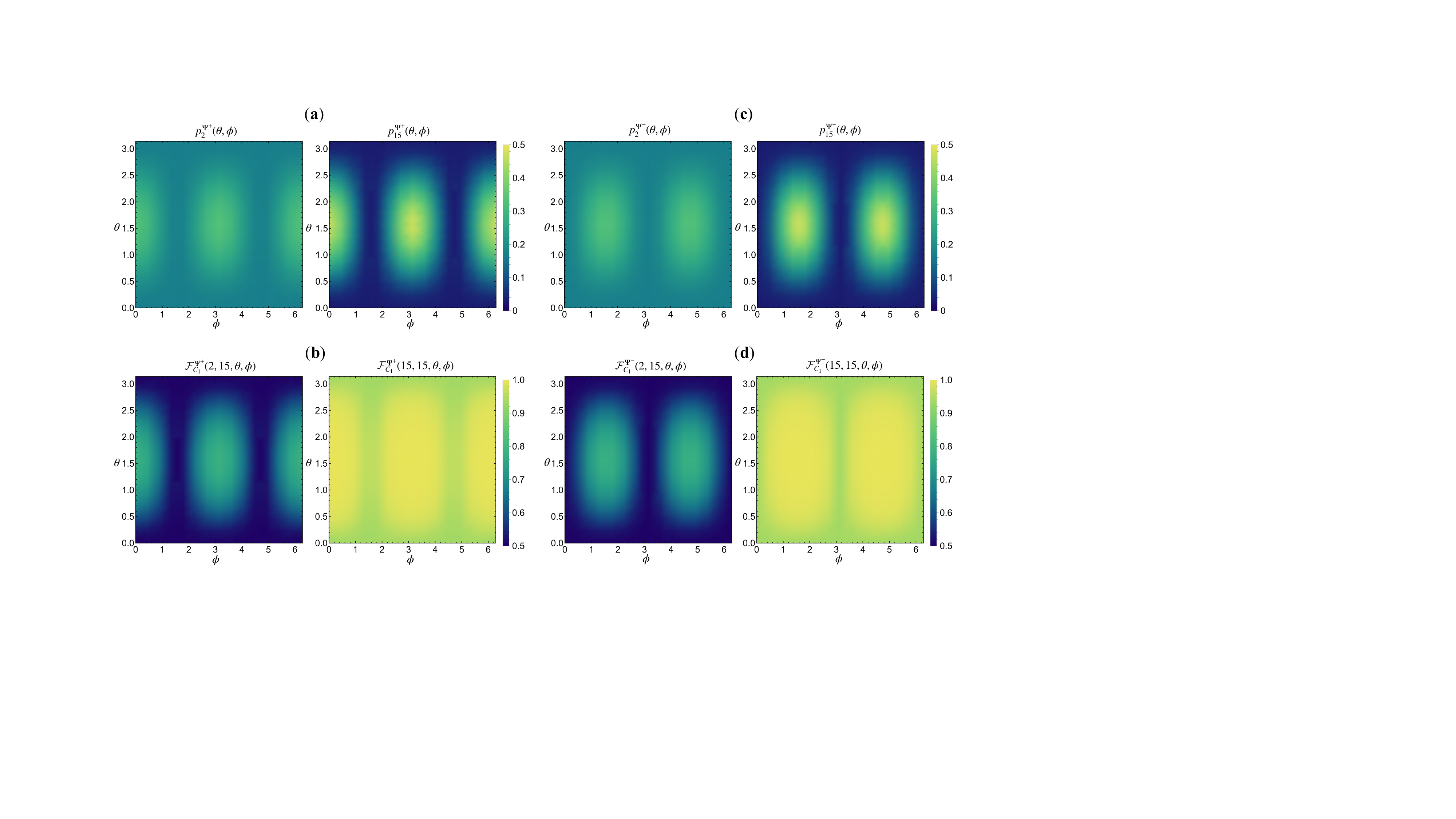}
    \caption{Heatmaps showing the probability and fidelity functions for the BSM outcomes $\ket{\Psi^\pm}_{X_kA_{k-1}}$ in a similar manner to Fig.~\ref{fig: phi minus heatmaps}. We show this for $k=2,15$ and $M=15$. \textbf{(a)} and \textbf{(b)} are respectively the probability and fidelity functions for outcome $\ket{\Psi^+}_{X_kA_{k-1}}$, while \textbf{(c)} and \textbf{(d)} are respectively the probability and fidelity functions for outcome $\ket{\Psi^-}_{X_kA_{k-1}}$.}
    \label{fig: psi heatmaps}
\end{figure*}
The state-dependent fidelity functions for the outcomes $\ket{\Psi^\pm}$ are
\begin{align}
    \mathcal{F}_{C_1}^{\Psi^+}(k,M,\theta,\phi) &= \frac{\big[k\mathcal{D}_{k,M}+2\big]\xi_x + (k+2)\mathcal{D}_{k,M}}
       {M(k+2)\big[(k-1)\xi_x + k+1\big]}\\
       \mathcal{F}_{C_1}^{\Psi^-}(k,M,\theta,\phi)&= \frac{\big[k\mathcal{D}_{k,M}+2\big]\xi_y + (k+2)\mathcal{D}_{k,M}}
       {M(k+2)\big[(k-1)\xi_y + k+1\big]}\,,
\end{align}
with $\xi_x=(2\sin^2\!\theta\cos^2\!\phi - 1)$ and $\xi_y=(2\sin^{2}\theta\sin^{2}\phi - 1)$. For a given $k$ and $M$, $\mathcal{F}_{C_1}^{\Psi^+}$ reaches a maximum value of $\gamma(k,M)$ at $\theta=\frac{\pi}{2}$ and $\phi=0,\pi$ (corresponding to the states $\ket{\pm}$), while minima equal to Eq.~\eqref{eq: minimum fidelity} are reached for the other Pauli eigenstates. A similar behavior can be observed for $\mathcal{F}_{C_1}^{\Psi^-}$, but with the maxima at $\theta=\frac{\pi}{2}$ and $\phi=\frac{\pi}{2},\frac{3\pi}{2}$ (corresponding to $\ket{\pm \ui}$) instead. We display the behaviors of the above functions (alongside their corresponding probability functions) with $k$ in Fig.~\ref{fig: psi heatmaps}, highlighting the similarities with Fig.~\ref{fig: phi minus heatmaps} but for different Pauli eigenbases.
\newpage
\end{widetext}

\end{document}